\pgfplotsset{compat=1.14}
\crefname{problem}{Problem}{Problems}
\crefname{transformation}{Transformation}{Transformations}
\crefname{rrule}{Reduction Rule}{Reduction Rules}
\crefname{part}{Part}{Parts}
\crefname{proposition}{Proposition}{Proposition}
\Crefname{proposition}{Prop.}{Props.}
\newcommand{\plb}{vertex lower bound}
\newcommand{\Plb}{Vertex lower bound}
\newcommand{\pG}{G_{\ell}^{\bullet}}
\newcommand{\pE}{E_{\ell}^{\bullet}}
\newcommand{\ppE}{E_{\ell}^{\circ}}
\newcommand{\pw}{w_{\ell}^{\bullet}}
\newcommand{\pI}{I^{\bullet}}
\newcommand{\N}{\mathbb{N}}
\newcommand{\col}{\text{col}}
\newcommand{\miPoSyCo}{\textsc{MinPSC}}
\newcommand{\MCCS}{\textsc{MinPICCS}}
\newcommand{\miPoSyCoAl}{\textsc{MinPSC-ALB}}
\newcommand{\dmiPoSyCoAl}{\(d\)-\textsc{PSC-ALB}}
\newcommand{\kmiPoSyCo}{$k$-\textsc{PSC}}
\newcommand{\mpsc}{\textsc{\miPoSyCo}}
\newcommand{\SC}{\textsc{Set Cover}}
\newcommand{\MSC}{\textsc{Minimum Set Cover}}
\newcommand{\nc}{z}
\DeclareMathOperator{\opt}{Opt}
\DeclareMathOperator{\mar}{Opt_{mar}}
\theoremstyle{definition}
\newtheorem{theorem}{Theorem}[section]
\newtheorem{proposition}[theorem]{Proposition}
\newtheorem{assumption}[theorem]{Assumption}
\newtheorem{lemma}[theorem]{Lemma}
\newtheorem{observation}{Observation}
\newtheorem{problem}[theorem]{Problem}
\newtheorem{algorithm}{Algorithm}
\newtheorem{definition}[theorem]{Definition}
\newtheorem{transformation}{Transformation}
\newtheorem{remark}[theorem]{Remark}
\newtheorem{example}[theorem]{Example}
\newcommand{\optprob}[3]{
  \pagebreak[3]
  \begin{problem}[\textsc{#1}]\leavevmode

    \noindent
    \textit{Input:}      #2

    \noindent
    \textit{Goal:}
      #3
  \end{problem}
}
\newcommand{\alValue}{margin}
\title{Parameterized Algorithms for Power\hyp Efficiently Connecting Wireless Sensor Networks:
  Theory and Experiments\thanks{A preliminary version of this article
    appeared
    in the
    \emph{Proceedings of the 13th International Symposium on Algorithms and Experiments for Wireless Networks (ALGOSENSORS’17), Vienna, Austria} \citep{BBNN17}.
    This extended version is enhanced
    by an experimental evaluation
    and
    by results translated
    from the last author's Master's thesis \citep{Smi20}: it contains stronger lower-bound results and
    a corrected and accelerated algorithm.}}
\author[1]{Matthias Bentert}
\affil{Algorithmics and Computational Complexity, Faculty~IV, TU Berlin, Berlin, Germany,\newline
  \texttt{\{matthias.bentert, andre.nichterlein, rolf.niedermeier\}@tu-berlin.de}}
\author[2]{René van Bevern}
\affil{Department of Mechanics and Mathematics,
  Novosibirsk State University,
  Novosibirsk,
  Russian Federation,\newline
  \texttt{rvb@nsu.ru, p.smirnov@g.nsu.ru}}
\author[1]{André Nichterlein}
\author[1]{Rolf Niedermeier}
\author[2]{Pavel V.\ Smirnov}
\begin{document}
\maketitle

\begin{abstract}
  We study an NP-hard problem motivated by
  energy-efficiently maintaining
  the connectivity of
  a symmetric wireless communication network:
  Given an edge-weighted \(n\)-vertex graph,
  find a connected spanning subgraph of minimum cost,
  where the cost is determined by letting each vertex
  pay the most expensive edge incident to it in the subgraph. 
  On the negative side, we show that
  \(o(\log n)\)-approximating the difference~\(d\)
  between the optimal solution cost
  and a natural lower bound is NP-hard
  and that, under the Exponential Time Hypothesis,
  there are no exact algorithms 
  running in \(2^{o(n)}\)~time
  or in \(f(d)\cdot n^{O(1)}\)~time
  for any computable function~\(f\).
  Moreover,
  we show that
  the special case
  of connecting $c$~network components
  with minimum additional cost
  generally cannot be polynomial\hyp time reduced
  to instances of size $c^{O(1)}$
  unless the polynomial\hyp time hierarchy collapses.
  On the positive side,
  we provide an algorithm
  that reconnects \(O(\log n)\)~connected components
  with minimum additional cost in polynomial time.
  These algorithms are 
  motivated by application scenarios
  of monitoring areas or
  where an existing sensor network
  may fall apart into several connected components
  due to sensor faults.
  In experiments,
  the algorithm %
  outperforms CPLEX with
  known ILP formulations
  when $n$~is sufficiently large compared to~$c$.
\end{abstract}
\textbf{Keywords:}
  monitoring areas,
  reconnecting sensor networks,
  parameterized complexity analysis,
  approximation hardness, 
  parameterization above lower bounds, 
  color-coding,
  experimental comparison

\section{Introduction}
We consider a well-studied %
problem arising in the context of
power\hyp efficiently
maintaining the connectivity of 
symmetric wireless sensor communication networks.
It is formally defined as follows
(see \cref{fig:exampleMPSC} for an example).

\optprob{\textsc{Min-Power Symmetric Connectivity} (\miPoSyCo{})%
}%
{\label{prob:miposyco}A connected undirected graph~$G = (V,E)$ with \(n\)~vertices, \(m\)~edges, and edge weights~$w\colon E \to \N$.}%
{Find a connected spanning subgraph~$T=(V,F), F \subseteq E,$ of~$G$ that minimizes
  \begin{align*}
    \sum_{v \in V} \max_{\{u,v\}\in F}w(\{u,v\}).%
  \end{align*}
}

Herein,
a \emph{spanning} subgraph of a graph~$G=(V,E)$ is a subgraph
that contains all vertices~$V$.
We denote the minimum cost of a solution to an \miPoSyCo{} instance~\(I=(G,w)\) by~\(\opt(I)\).
Throughout this work,
\emph{weights} always refer to edges
and
\emph{cost} refers to vertices or subgraphs.
For proving computational hardness results,
we will also consider the \emph{decision version}
of \miPoSyCo{}, which we call \kmiPoSyCo{}.
Herein, given a natural number~$k$, the problem is to decide
whether a \miPoSyCo{} instance~\(I=(G,w)\)
satisfies \(\opt(I)\leq k\).

\begin{figure}
  \centering
  \begin{tikzpicture}
    \node[circle,draw, label=$5$] (v1) [] {$v_1$};
    \node[circle,draw, label=$6$] (v2) [right= of v1] {$v_2$};
    \node[circle,draw, label=$6$] (v3) [right= of v2,xshift=4cm] {$v_3$};
    \node[circle,draw, label=$5$] (v4) [right= of v3] {$v_4$};
    \node[circle,draw, label=below:$1$] (v5) [below= of v2] {$v_5$};
    \node[circle,draw, label=below:$3$] (v6) [below= of v3] {$v_6$};
    \path[draw, ultra thick] (v1)-- node [above]{$5$}(v2);
    \path[draw, ultra thick] (v2)-- node [above]{$6$}(v3);
    \path[draw, ultra thick] (v3)-- node [above]{$5$}(v4);
    \path[draw] (v2)-- node [left]{$4$}(v5);
    \path[draw, ultra thick] (v3)-- node [right]{$3$}(v6);
    \path[draw, ultra thick] (v5)-- node [below]{$1$}(v6);
  \end{tikzpicture}
  \caption{A graph with positive edge weights and an optimal solution (bold edges).
    Each vertex pays the most expensive edge incident to it in the solution (the numbers next to the vertices).
    The cost of the solution is
    the sum of the costs paid by the vertices.
    Note that the optimal
    solution has cost~$26$ while a minimum spanning tree
    (using edge~\(\{v_2,v_5\}\) instead of edge~$\{v_2,v_3\}$)
    has cost~$27$ (as a \miPoSyCo{} solution).}
  \label{fig:exampleMPSC}
\end{figure}

\mpsc{} falls into the category of survivable network design~\citep{Pan11}.
We refer to \citet{CHPRV02} and \citet{CMZ02} for a survey on different application scenarios and some results to related problems.
Notably, \cref{fig:exampleMPSC} reveals that computing 
a minimum-cost spanning tree typically does not yield an optimal 
solution for \miPoSyCo{} (\citet{EPS13} and \citet{GBHO19} provide further 
discussions concerning the relationship to minimum-cost spanning trees).
Indeed, \miPoSyCo{} is NP-hard \citep{KKKP00,CPS04}.
In this work, we provide a refined
computational complexity analysis
of \miPoSyCo{} in terms of parameterized complexity theory.
In this way, we complement previous findings mostly concerning %
polynomial-time approximability~\citep{KKKP00,ACM+06,CPS04,EPS13,GBHO19}, heuristics and 
integer linear programming~\citep{ACM+06,EMP17,MG05,PEM19}, and computational 
complexity analysis for special cases~\citep{CK07,CPS04,EPS13,HW16}.
We also complement recent positive results on provable effective
data reduction for \mpsc{} \citep{BBF+xx} by negative ones.

\begin{table}[t]
   \caption{Overview on our results, using the following terminology:
     \(n\)---number of vertices,
     \(m\)---number of edges,
     \(d\)---difference between the optimal solution cost and a lower bound (see \cref{prob:mpscal}),
     \(c\)---number of connected components of the subgraph consisting of obligatory edges (see \cref{def:oblig}).
     \miPoSyCoAl{} is the problem of computing the minimum value of~\(d\) (\cref{prob:mpscal}),
     \dmiPoSyCoAl{} is the corresponding decision problem.}
  \label{tab:results}
  \setlength{\tabcolsep}{2.8pt}
  \renewcommand{\arraystretch}{1.25}
	\begin{tabularx}{\textwidth}{crXl}
		\toprule
		& Problem & Result & Reference\\
		\midrule
		
		\multirow{3}{*}{\rotatebox[origin=c]{90}{Sec.~\ref{sec:hardness}}}
                & \miPoSyCoAl & NP-hard to approximate within a factor of~$o(\log n)$ &
                \cref{thm:hard}\eqref{part:inapprox}\\
                & \dmiPoSyCoAl & $W[2]$-hard when parameterized by~$d$ & \cref{thm:hard}\eqref{part:w[2]-compl}\\
                & \kmiPoSyCo & not solvable in~$2^{o(n)}$~time unless ETH fails& \cref{thm:hard}\eqref{part:eth-hard}\\
		\midrule
		
		\multirow{4}{*}{\rotatebox[origin=c]{90}{Sec.~\ref{sec:ccs}}}
                & \mpsc
                & solvable in   \(\ln 1/\varepsilon \cdot (4e^2/\sqrt{2\pi})^c \cdot 1/\sqrt{c}
  \cdot O(9^cm+4^cnm+nm\log n)\)
 time with error probability at most~$\varepsilon$
                & \cref{thm:ccs}\\
                & \mpsc & solvable in~$c^{O(c\log c)} \cdot n^{O(1)}$ time & \cref{thm:ccs}\\
                & \mpsc & solvable in~$O(3^n \cdot m)$ time & \cref{prop:exalg}\\
                & \kmiPoSyCo & WK[1]-hard parameterized by~$c$ & \cref{thm:wkhard}\\

		\bottomrule
	\end{tabularx}
\end{table}

\paragraph{Our contributions.}
Our work,
which is
summarized in \cref{tab:results},
is driven by the question when small input-specific parameter values allow for fast (exact) solutions in practically relevant special cases. 
Our ``use case scenarios'' herein are monitoring areas and
restoring connectivity in a sensor network after sensor faults.
In both scenarios, one naturally obtains lower bounds on the cost
paid by each vertex:
In the scenario of sensor faults,
one might want to reconnect the sensor network
without changing its topology too much,
that is,
so that each vertex pays at least the edges in an old solution.
In the scenario of monitoring areas,
we will notice that taking 
the cheapest edges incident to each vertex
into a solution
already yields a spanning subgraph
with few connected components.
The cost of this spanning subgraph coincides with the trivial lower bound
\[ L := \sum_{v \in V} \min_{\{u,v\}\in E} w(\{u,v\}) \]
and
it only remains to connect its components to get a solution.
When there is a solution whose cost coincides with the lower bound~$L$,
then it is easy to find: simply take for each vertex the incident edges
of minimum weight.
Naturally, the question arises whether
we can efficiently find a solution if its cost is only little more than~$L$?

In \cref{sec:hardness}, we show that the answer to this question is ``no''.
We show that \(o(\log n)\)\hyp approximating
the difference~\(d := \opt(G,w) - L\) between the minimum solution cost and~$L$ in an~$n$-vertex graph is NP-hard.
Assuming the Exponential Time Hypothesis (ETH) \citep{IP01},
also we show that there is no
exact \(2^{o(n)}\)-time algorithm for \miPoSyCo{}.  
Going even further,
we prove W[2]-hardness with respect to the parameter~$d$,
and thus,
under ETH,
there is no exact algorithm solving \miPoSyCo{} in \(f(d)\cdot n^{O(1)}\)~time
for any computable function~\(f\).

In \cref{sec:ccs}, we provide an (exact) algorithm for \miPoSyCo{} that exploits the above described lower bound~$L$ in a different way.
More precisely, we present an algorithm that works in polynomial time if one already has a sensor network with $O(\log n)$~connected components or if one can find a set of \emph{obligatory} edges that can be added to any optimal solution and yield a spanning subgraph with \(O(\log n)\) connected components.
In particular, this means that we show fixed-parameter tractability
for \miPoSyCo{} with respect to the parameter ``number~\(c\) of connected 
components in the spanning subgraph consisting of obligatory edges''.
Cases with small~\(c\) occur, for example,
in grid-like sensor arrangements,
which minimize
sensing area overlap
when monitoring 
areas~\citep{ZH05,ZEAC09}
or when about 10\,\% of all sensors
drop out of a triangular grid network (as we will see in \cref{sec:ccs-exp}).
We also show negative
results with respect to the parameter~$c$:
we show that the decision problem \kmiPoSyCo{}
is WK[1]-hard parameterized by~$c$;
problems that are WK[1]-hard parameterized by some parameter~$c$
are conjectured not to be reducible to solving instances of size~$c^{O(1)}$
\citep{HKS+15b}.

In \cref{sec:ccs-exp},
we conduct an experimental evaluation of our algorithm and
compare it to CPLEX on state-of-the art ILP models for \mpsc{},
one of which also exploits
the connected components of the
spanning subgraph consisting of obligatory edges.
We observe that our algorithms 
significantly outperform CPLEX with
the known ILP models
when $n$~is sufficiently compared to the fixed~$c$.

\section{Preliminaries}
\subsection{Notation}
We use \(\N\)~to denote the natural numbers
including zero.
By convention,
the maximum of the empty set is \(\max\emptyset=-\infty\)
and the minimum of the empty set is \(\min\emptyset=\infty\),
since these are neutral
elements with respect to taking the maximum
and the minimum, respectively.

\paragraph{Graph theory.}
We consider undirected,
finite,
simple graphs
$G=(V,E)$,
where $V$~is the set of \emph{vertices}
and $E \subseteq \{\{v,w\}\mid v\ne w\text{ and }v,w\in V\}$~is the set of \emph{edges}.
The \emph{(open) neighborhood} \(N_G(v):=\{u\in V\mid \{u,v\}\in E\}\) of a vertex~\(v\in V\)
is the set of vertices \emph{adjacent to~\(v\)} in~\(G\).
The \emph{closed neighborhood} of~\(v\)
is \(N_G[v]:=N_G(v)\cup\{v\}\).
For a vertex set~$U\subseteq V$, $G-U$ is the graph
obtained from~$G$ by deleting the vertices in~$U$
and their incident edges.
A \emph{clique} is a graph where each pair of vertices is adjacent.

\subsection{Computational model and complexity}
\label{sec:cc}
\paragraph{Computational model.}
We use the computational model
of the \emph{random access machine} (RAM)
with unit costs,
which acts on an array of rationals
and carries out
memory accesses, addition, subtraction,
multiplication, and division
in constant time \citep[Section~4.2]{Sch03}.
This assumption is justified
since our algorithms
will not operate on numbers
significantly larger than
the numbers given in the input.

\paragraph{Exponential time hypothesis.}
The Exponential Time Hypothesis (ETH) was introduced by \citet{IP01}
and it states that \textsc{3-Sat},
the problem of deciding the satisfiability
of a formula in 3-conjunctive normal form,
cannot be solved in~$2^{o(n+m)}$ time, where~$n$ and~$m$ are the number of variables and clauses in the input formula, respectively.
ETH implies FPT${}\ne{}$W[1] \citep{CHKX06}.

\paragraph{Inapproximability.}

In order to transfer inapproximability results of
some optimization problem~$\Pi$ to an optimization problem~$\Pi'$,
we will use $L$-reductions \citep{WS11}:
An \emph{$L$-reduction}
with parameters~$\alpha$ and~$\beta$
from a minimization problem~$\Pi$ to
a minimization problem~$\Pi'$ is a pair of
polynomial\hyp time algorithms~$A_1$ and $A_2$ such that
\begin{enumerate}[(i)]
\item $A_1$ turns any instance~$I$ of~$\Pi$
  into an instance~$I'$ of~$\Pi'$
  such that $\opt(I') \le \alpha \opt(I)$, and
  \label[part]{part:construction}
  \label[part]{part:upper-bound}
\item $A_2$ turns any solution of value~$\rho'$ for~$I'$
  into a solution of value~$\rho$ for~$I$
  such that \label[part]{part:lower-bound}
  $|\opt(I) - \rho| \le \beta|\opt(I') - \rho'|.$
\end{enumerate}
In particular,
if there is an $L$-reduction with $\alpha=\beta=1$
from $\Pi$ to $\Pi'$,
then
any $\gamma$-approximation for~$\Pi'$
transfers to a $\gamma$-approximation for~$\Pi$.
Thus,
if $\Pi$~is not $\gamma$-approximable in polynomial time,
then neither is~$\Pi'$.

\subsection{Fixed-parameter tractability}

\paragraph{Fixed\hyp parameter algorithms.}
The essential idea behind fixed\hyp parameter algorithms is
to accept exponential running times,
which are seemingly inevitable when solving NP-hard problems,
but to restrict them to one aspect of the problem,
the \emph{parameter} \citep{CFK+15,DF13,FG06,Nie06}.
Thus, formally, an instance of a \emph{parameterized problem}~\(\Pi\subseteq\Sigma^*\times\mathbb{N}\) is a pair~$(x,k)$ consisting of the \emph{input~$x$} and the \emph{parameter~$k$}.
A parameterized problem~$\Pi$ is \emph{fixed\hyp parameter tractable} with respect to a parameter~$k$ if there is an algorithm deciding~$(x,k)\in\Pi$ in $f(k) \cdot n^{O(1)}$~time for some computable function~$f$ and $n=|x|$.  Such an algorithm is called a \emph{fixed\hyp parameter algorithm}.
It is potentially efficient for small values of~$k$, in contrast to an algorithm that is merely running in polynomial time for each fixed~$k$.
FPT is the complexity class of fixed\hyp parameter tractable parameterized problems.
Observe that, for constant parameter values~$k$,
fixed-parameter tractability implies polynomial-time solvability,
where the degree of the
polynomial is \emph{independent} of~$k$.
XP is the complexity class of parameterized problems solvable in polynomial time if the parameter is a constant, thus allowing for parameter dependencies in the degree of the running-time polynomial.

\paragraph{Parameterized intractability.}
The parameterized analog of P $\subseteq$ NP is
a hierarchy of complexity classes
FPT\({}\subseteq{}\)W[1]\({}\subseteq{}\)W[2]\({}\subseteq{}\ldots\subseteq{}\)XP,
where all inclusions are assumed to be proper.
A~parameterized problem~$\Pi$ with parameter~\(k\) is
\emph{W[$t$]-hard} for some \(t\in\mathbb{N}\)
if every problem in W[$t$] has a parameterized reduction to~\(\Pi\):
a \emph{parameterized reduction} from a parameterized problem~$\Pi_1$
to a parameterized problem~$\Pi_2$ is an algorithm mapping an instance~$(x,k)$
to an instance~$(x',k')$
in~$f(k)\cdot|x|^{O(1)}$~time
such that $k'\leq g(k)$ and
\((x,k)\in\Pi_1\iff(x',k')\in\Pi_2\),
where \(f\)~and~\(g\) are arbitrary computable functions.
If $g$~is a polynomial,
then the parameterized reduction is called a \emph{polynomial parameter transformation (PPT)}.
By definition,
no W[$t$]-hard problem is fixed\hyp parameter tractable
unless FPT${}={}$W[$t$].

\subsection{(Hardness of) provably effective data reduction}
\paragraph{Kernelization.}
Kernelization is the main formalization
of data reduction with provable performance guarantees
\citep{FLSZ19}.
It has also proven effective in
experimental studies \citep{ABN06,BS20,BFTxx,MPMM10}.

A~\emph{kernelization} for a parameterized problem~$\Pi\subseteq\Sigma^*\times\mathbb N$
is a polynomial\hyp time algorithm
that maps any instance~$(x,k)\in\Sigma^*\times \mathbb N$
to an instance~$(x',k')\in\Sigma^*\times\mathbb N$
such that
$(x,k)\in \Pi \iff {(x',k')\in \Pi}$, and
 $|x'|+k'\leq g(k)$ for some computable function~\(g\).
We call \((x',k')\) the \emph{problem kernel}
and \(g\) its \emph{size}.
A generalization of problem kernels
are \emph{Turing kernels},
where one is allowed to generate multiple reduced
instances instead of a single one:
A~\emph{Turing kernelization} for
a parameterized problem~$\Pi\subseteq\Sigma^*\times\mathbb N$
  is an algorithm that decides~$(x,k)\in \Pi$
  in polynomial time
  given access to an oracle
  that answers $(x',k')\in \Pi$ in constant time
  for any \((x',k')\in\Sigma^*\times\mathbb N\) with
  $|x'|+k'\leq g(k)$,
  where $g$~is an arbitrary function
  called the \emph{size} of the Turing kernel.

  \paragraph{Kernelization hardness.}
  \emph{WK[1]-hard} parameterized problems
  with parameter~$k$
  do not have problem kernels of size polynomial in~$k$
  unless the polynomial\hyp time hierarchy collapses
  and are conjectured not to have
  Turing kernels of polynomial size in~$k$
  either~\citep{HKS+15b}.
  Herein,
  a problem~$\Pi$ is \emph{WK[1]-hard}
  if every problem in WK[1] has
  a polynomial parameter transformation to~$\Pi$.
  An example for a WK[1]-hard problem is
  Set Cover parameterized by the size of the universe.

\section{Parameterization above lower bound on total cost}\label{sec:hardness}
In this section, we show that lower bounds on the total costs are hard to exploit algorithmically.
To this end,
we formalize the idea of lower bounds as follows.
\begin{definition}[\plb{}s] \label{def:plb}
  \emph{\Plb{}s} are
  given by a
  function~\({\ell\colon V\to\N}\)
  such that
  there is an optimal solution~\(T=(V,F)\)
  to \miPoSyCo{} satisfying
  \[
    \max_{\{u,v\}\in F} w(\{u,v\})\geq \ell(v)\quad\text{  for every vertex~\(v\in V\)}.
  \]
\end{definition}

\begin{example}\label{ex:nn}
  \looseness=-1
  A trivial \plb{}~\(\ell(v)\) is given by the weight of the minimum\hyp weight edge incident to~\(v\), because \(v\)~has to be connected to some vertex in any solution.
  A more sophisticated \plb{} is given by
  \[
    \ell(v)=\max_{G'\in C}\min_{u\in V(G')}w(\{u,v\}),\text{\quad where $C$~is the set of connected components of~$G-\{v\}$.}
  \]
\end{example}
\noindent
In the rest of this section,
we concentrate on
the trivial lower bound
given by the minimum\hyp weight edge.
Thus, the overall cost of a solution is at least
\[ L = \sum_{v \in V} \ell(v) = \sum_{v \in V} \min_{\{u,v\}\in E} w(\{u,v\}). \]
If the weights~\(w\colon E\to\N\) of the edges in a graph~\(G=(V,E)\) are at least one, then this immediately yields a ``large'' lower bound~$L\geq n$ on the cost of an optimal solution.
This implies that even constant-factor approximation algorithms (e.\,g., the one by \citet{ACM+06}) can return solutions that are, in absolute terms, quite far away from the optimum. 
Furthermore,
fixed-parameter tractability for \kmiPoSyCo{}
parameterized by the solution cost
immediately follows from \cref{prop:exalg} (in \cref{sec:outl}).

A more desirable and stronger result
would be about the difference~$d$ between the optimal solution cost and~$L$:
for example a polynomial\hyp time constant\hyp factor approximation of~$d$
or a fixed-parameter tractability result with respect to the parameter~$d$~\citep{MR99,CPPW13,GP16b,BFK18}.
However,
under the Exponential Time Hypothesis,
we show that such algorithms do not exist.
To formally state our hardness results, we use the following problem variant,
which incorporates the lower bound.

\optprob{\miPoSyCo{} Above Lower Bound (\miPoSyCoAl{})}%
{\label{prob:mpscal}A connected undirected graph~$G = (V,E)$ and edge weights~$w\colon E \to \N$.}%
{Find a connected spanning subgraph~$T=(V,F)$ of~$G$ that minimizes
  \begin{align}
    \sum_{v \in V}
    \smashoperator[r]{\max_{\{u,v\}\in F}}w(\{u,v\}) -
    \sum_{v \in V}
    \smashoperator[r]{\min_{\{u,v\}\in E}}w(\{u,v\}). \label{goalfunc-AL}
  \end{align}
}
\noindent
For a \miPoSyCoAl{} instance~$I=(G,w)$,
we denote by~$\mar(I)$ the minimum value of \eqref{goalfunc-AL}
(we also refer to~$\mar(I)$ as the \emph{\alValue{}} of~$I$).
For showing hardness results,
we will also consider the \emph{decision version} of the problem:
By \dmiPoSyCoAl{},
we denote the problem of deciding whether
an \miPoSyCoAl{} instance~$I=(G,w)$ satisfies $\mar(I)\leq d$.

\pagebreak[3]
\begin{theorem}
  \label{thm:hard}
  The following properties hold even in graphs with edge weights one and two:
  \begin{enumerate}[(i)]
  \item \miPoSyCoAl{} is NP-hard to approximate within a factor of~$o(\log n)$, \label{part:inapprox}
  \item \dmiPoSyCoAl{} is W[2]-hard when parameterized by~$d$, and\label{part:w[2]-compl}%
  \item unless ETH fails, \kmiPoSyCo{} and \dmiPoSyCoAl{} are not solvable in~$2^{o(n)}$ time. \label{part:eth-hard} %
  \end{enumerate}
  Moreover,
  \dmiPoSyCoAl{}
  is solvable in
  in $O(2^d\cdot n^{d+2})$~time
  and 
  \eqref{part:inapprox} and \eqref{part:w[2]-compl} also hold in
  complete graphs with metric edge weights.
\end{theorem}

\noindent
\cref{thm:hard}\eqref{part:w[2]-compl} implies that,
under ETH,
\dmiPoSyCoAl{} is not solvable in \({f(d)\cdot n^{O(1)}}\) time \citep{CHKX06}.
In contrast,
\cref{thm:hard} also shows that
it is solvable in polynomial time for any fixed~\(d\).
We prove \cref{thm:hard}
using a modified reduction from \MSC{} to \miPoSyCo{} due to \citet{EPS13}.

\optprob{\MSC{}}%
{A universe~$U$ and a family~$\mathcal{F}$ of subsets of~$U$.}%
{Find a \emph{set cover}~$\mathcal{F'} \subseteq \mathcal{F}$ (that is, $\bigcup_{S\in\mathcal{F'}}S = U$) of minimum size.}

\noindent
We now present the reduction of \citet{EPS13}
and then show that a modification yields \cref{thm:hard}.
Note that Erzin et al. used edge weights zero and one in their reduction,
whereas we use positive integers,
namely one and two.

\begin{transformation}\label{trans:phi}
  Given an instance~$(U, \mathcal{F})$ of \MSC{},
  construct an instance~$(G, w)$ of \miPoSyCoAl{} as follows.
  The vertex set of the graph~$G=(V,E)$
  is $V:= \{s\} \uplus U \uplus \mathcal F$
  for some new vertex~$s$.
  There is an edge~$\{u,S\}$ of weight two in~$G$
  for each~$u\in U$ and~$S \in \mathcal F$
  such that~$u \in S$. 
  Moreover, there is an edge~$\{s, S\}$
  of weight one for each~$S\in\mathcal F$.
\end{transformation}

\noindent
\cref{trans:phi} creates graphs with
edge weights one and two.
The following transformation
completes it to satisfy the triangle inequality.

\begin{transformation}
  \label{trans:metric}
  Given an instance~$(U, \mathcal{F})$ of \MSC{}, %
  first apply \cref{trans:phi}
  and,
  in the resulting instance~$(G,w)$ of \mpsc{},
  turn~$G$ into a complete graph,
  assigning all newly added edges~$\{u,v\}$
  a weight equal to the length of a shortest weighted $u$-$v$-path.
\end{transformation}

\noindent
\cref{trans:metric} introduces
edges of weight three and four.
To prove the correctness of the reduction,
we first prove that it is never required
for these edges to be used in a solution.

\begin{lemma}
  \label[lemma]{lem:metstruct}
  Let $(U,\mathcal F)$~be an instance of \SC{} and
  $(G,w)$ be an \mpsc{} instance
  generated from it by \cref{trans:phi} or \cref{trans:metric},
  where $G=(V,E)$ and $V=\{s\}\uplus U\uplus\mathcal F$.
  Finally, for a solution~$T=(V,F)$ to~$(G,w)$,
  let
  \[
    p_T(v):=\max_{\{u,v\}\in F}w(\{u,v\})
  \]
  be the cost of vertex~$v\in V$ paid in~$T$.
  Then,
  the cost of~$T$ is at least~$1+|\mathcal F|+2|U|$ and,
  in polynomial time
  without increasing its cost, one can transform~$T$
  so that
  \begin{enumerate}[(i)]
  \item\label{metstruct1} $p_T(s)=1$,
  \item\label{metstruct2} $p_T(S)\in\{1,2\}$ for all~$S\in\mathcal F$,
  \item\label{metstruct3} $p_T(u)=2$ for all~$u\in U$.
  \end{enumerate}
\end{lemma}

\begin{proof}
Take an arbitrary
solution~$T=(V,F)$ for~$(G,w)$.
Obviously, $p_T(s)\geq 1$, $p_T(u)\geq 2$ for any $u\in U$,
and $p_T(S)\geq 1$ for any $S\in\mathcal F$,
yielding the lower bound of $1+|\mathcal F|+2|U|$
on the cost of~$T$.

Since the rest of the lemma obviously holds for \cref{trans:phi}, we prove it for \cref{trans:metric}.
To this end,
we partition~$\mathcal F=\mathcal F_1\uplus \mathcal F_2$
so that $p_T(S)=1$ for all $S\in \mathcal F_1$
and $p_T(S)\geq 2$ for all $S\in \mathcal F_2$.
Moreover,
consider $U_2=\bigcup_{S\in \mathcal F_2} S$ and $U_1=U\setminus U_2$.
Observe that $p_T(u)\geq 3$ for each~$u\in U_1$,
since there are no edges of weight one incident to~$u$
and each of $u$'s neighbors along
an edge of weight two pays only cost one.
Thus,
the cost of~$T$ is
\begin{align}
  \notag\sum_{v\in V}p_T(v)&\geq |\{s\}|+|\mathcal F_1|+2|\mathcal F_2|+2|U_2|+3|U_1|\\&=1+|\mathcal F|+2|U|+|\mathcal F_2|+|U_1|.\label{Tlobo}
\end{align}
We now describe a solution~$T'$
satisfying \cref{lem:metstruct}\eqref{metstruct1}--\eqref{metstruct3}
that can be computed in polynomial time from~$T$
and whose cost is exactly \eqref{Tlobo}.
\begin{itemize}
\item $T'$ contains all edges of weight one incident to~$s$.
  These connect all vertices~$S\in\mathcal F$ to~$s$.
  Note that $1=p_{T'}(s)\leq p_T(s)$.
  
\item $T'$ contains all edges of weight two incident
  to some~$S\in \mathcal F_2$.
  These connect all vertices~$u\in U_2$
  to~$\mathcal F_2$ and, hence, to~$s$.
  Note that $2=p_{T'}(S)\leq p_{T}(S)$ for $S\in\mathcal F_2$.
  Since $T'$ will not contain any other edges incident to~$U_2$,
  $2=p_{T'}(u)\leq p_{T}(u)$ for any $u\in U_2$.
  
\item It remains to connect each vertex~$u\in U_1$ to the rest of the graph.
  To this end,
  $T'$ contains an arbitrary edge~$\{u,S\}$
  such that $u\in S$, which has weight two.
  Note that $2=p_{T'}(u)<3\leq p_T(u)$ for any $u\in U_1$.
\end{itemize}

\noindent
Let $K\subseteq V$ be the set of vertices~$u$
such that $p_{T'}(u)> p_{T}(u)$.
That is,
actually, $K\subseteq\mathcal F_1$
and $|K|\leq |U_1|$,
since $K\subseteq \mathcal F_1$ consists exactly
of the end points
of the edges in~$T'$ we added for each $u\in U_1$.
Moreover, $p_{T'}(S)=2$ for each~$S\in K$.
Thus,
the cost of this solution~$T'$ is
\begin{align*}
  \sum_{v\in V}p_{T'}(v)&=|\{s\}|+|\mathcal F_1\setminus K|+2|K|+2|\mathcal F_2|+2|U|\leq 1+|\mathcal F_1|+|U_1|+2|\mathcal F_2|+2|U|=\eqref{Tlobo}.\qedhere
\end{align*}
\end{proof}

\noindent
The following lemma,
together with the approximation hardness of \MSC{},
will yield \cref{thm:hard}\eqref{part:inapprox}.

\begin{lemma}\label{lem:L-reduction}
  \cref{trans:phi} yields an
  $L$-reduction with parameters~$\alpha = \beta = 1$ from \MSC{} to \miPoSyCoAl{} in graphs with edge weights one and two,
  whereas \cref{trans:metric}
  yields such an $L$-reduction to \miPoSyCoAl{}
  in complete graphs with metric edge weights.
\end{lemma}

\begin{proof}
If is obvious that \cref{trans:phi,trans:metric}
create graphs with edge weights one and two
and metric complete graphs, respectively.
We verify that they satisfy properties (\ref{part:construction}) and (\ref{part:lower-bound}) of $L$-reductions (cf.\ \cref{sec:cc}).

\eqref{part:construction}
Both transformations
work in polynomial time.
It remains to show that the \alValue{} of the optimal solution for
the instance~$I'=(G,w)$ of \miPoSyCoAl{}
is at most the cost of the optimal solution for
the instance~$I=(U,\mathcal F)$ of \MSC{}.
To this end, let~$\mathcal{F'} \subseteq \mathcal{F}$ be a set cover of minimum size and thus~$\opt(I) = |\mathcal{F'}|$.
Consider the spanning subgraph~$T$ of~$G=(\{s\}\uplus U\uplus\mathcal F, E)$ %
that contains all edges of weight one incident to~$s$
and
all edges~$\{u,S\}$ (of weight two) such that
$S \in \mathcal{F'}$ and~$u \in S$.
Then $T$~is a connected spanning subgraph: 
All vertices in~$\mathcal{F}$ are connected via~$s$
by the edges with weight one. 
Furthermore, each vertex in~$U$ is connected to at least one vertex in~$\mathcal{F'}$ since~$\mathcal{F'}$ is a set cover.
We analyze its margin to show~$\mar(I') \le \opt(I)$.

By \cref{lem:metstruct}, the lower bound of the cost of any connected spanning subgraph in~$G$ is~$2|U|+|\mathcal F|+1$.
Yet the overall cost of~$T$ is at most~$2|U|+|\mathcal F|+1+\opt(I)$
and thus the \alValue{}~$\rho'$ of~$T$ is~$\opt(I)$:
Compared to the lower bound \(2|U|+|\mathcal F|+1\),
each vertex~$S$ with $S\in \mathcal{F'}$
pays one additionally (in total two)
since it is incident to a weight-two edge in~\(T\)
that has the other endpoint in~$U$.
Thus, the overall cost is~$2|U| + |\mathcal F| + 1 +\opt(I)$
and~$\mar(I') \le \opt(I)$.

\eqref{part:lower-bound}
We show how to transform solutions
for the instance~$I'=(G,w)$ of \miPoSyCoAl{}
into set covers for~$I=(U,\mathcal F)$.
To this end, let $T$~be a connected spanning subgraph of~$G=(\{s\}\uplus U\uplus\mathcal F, E)$.
Without loss of generality,
$T$ is of the form described by \cref{lem:metstruct}:
$s$ pays one,
each vertex~$u\in U$ pays two,
and each vertex~$S \in \mathcal{F}$ pays one or two.
We show that the subset~$\mathcal F' \subseteq \mathcal{F}$
of vertices paying two is a set cover for~$I$.
To this end, consider any element~\(u\) in~\(U\). %
Since~$T$ is connected,
it follows that at least one edge~$e$
is incident to~$u$ is in~$T$.
Since $w(e)=2$,
by construction, the second endpoint of~$e$
is in~$\mathcal{F}$ and thus,
$e = \{u, S\}$ for some~$S \in \mathcal{F}$ with $u\in S$.
By definition of~$\mathcal{F'}$,
we have~$S \in \mathcal{F'}$.
Thus, \(u\)~is covered by~\(\mathcal{F'}\).

\looseness=-1
Observe that the \alValue{} of~$T$ is~$\rho = |\mathcal F'|$,
which is equal to the size~$\rho'$ of the set cover~$\mathcal{F'}$.
Since this argument holds for any connected spanning subgraph of~$G$,
we have~$\opt(I) \le \mar(I')$.
Since we already showed~$\mar(I') \le \opt(I)$
it follows that~$\mar(I') = \opt(I)$.
Hence, we arrive at~$|\opt(I)-\rho| = |\mar(I') - \rho'|$.
\end{proof}

\noindent
The following lemma,
together with the W[2]-hardness of $k$-\SC{},
will yield
\cref{thm:hard}(\ref{part:w[2]-compl}).

\begin{lemma}\label{lem:param-reduction}
  \cref{trans:phi,trans:metric}
  yield parameterized reductions from \(k\)-\SC{} parameterized by the solution size~\(k\) to $d$-\miPoSyCoAl{} parameterized by~$d$
  in graphs with edge weights one and two
  or in complete graphs with triangle inequality,
  respectively.
  The graph constructed by \cref{trans:phi}
  has $O(|U| + |\mathcal{F}|)$~vertices.
\end{lemma}
\begin{proof}
We slightly enhance \cref{trans:phi,trans:metric} using the decision versions of the problems and setting~$d := k$.
Note that, in the proof of \cref{lem:L-reduction},
we showed  that there is a set cover of size~$k$ in the given \MSC{} instance if and only if there is a connected spanning subgraph~$T$ with cost~$2|U|+|\mathcal F|+1+d$ in the constructed $d$-\miPoSyCoAl-instance.
Finally, observe that \cref{trans:phi}
creates a graph where the number of vertices
is~$O(|U|+|\mathcal F|)$. 
\end{proof}

\noindent
Combining the known intractability of \(k\)-\SC{} with \cref{lem:L-reduction,lem:param-reduction}, we can finally prove \cref{thm:hard}.

\begin{proof}[Proof of \cref{thm:hard}.]
(\ref{part:inapprox}) follows from \cref{lem:L-reduction} and the fact that \MSC{} is NP-hard to approximate within a factor of~$o(\log n)$~\citep{RS97}.

(\ref{part:w[2]-compl}) follows from \cref{lem:param-reduction} and the fact that \(k\)-\SC{} is W[2]-complete parameterized by the solution size~$k$~\citep{DF13}.

(\ref{part:eth-hard}) follows from \cref{lem:param-reduction}, the observation that \cref{trans:phi} runs in polynomial time, and the fact that \(k\)-\SC{} cannot be solved in~$2^{o(|U|+|\mathcal F|)}$~time unless the ETH fails~\citep{IPZ01}.

Finally,
\mpsc{} can be solved in
in $O(2^d\cdot n^{d+2})$~time
as follows:
At most \(d\)~vertices can pay more than their \plb{}.
We can try all possibilities for choosing \(i\leq d\)~vertices, all \({d\choose i}\)~possibilities to increase their total cost by at most~\(d\), and check whether the graph of the ``paid'' edges is connected.
The algorithm runs in \(\sum_{i=1}^d{n\choose i}{d\choose i}\cdot O(n+m)\subseteq O(2^d\cdot n^{d+2})\)~time.
\end{proof}

\section{Parameterizing by the number of connected components induced by obligatory edges}
\label{sec:ccs}
Complementing our hardness results in \cref{sec:hardness}, we show in
this section how \plb{}s (see \cref{def:plb}) can be algorithmically
exploited.
To this end,
in \cref{sec:obl},
we describe how \plb{}s
induce, what we call, an \emph{obligatory subgraph},
which is part of at least one optimal solution.

In \cref{sec:outl},
we present the algorithm
that solves \mpsc{} efficiently
if the obligatory subgraph has few connected components.
In \cref{sec:corr,sec:dp,sec:rt},
we prove its correctness and running time.

Finally,
in \cref{sec:dr},
we prove that it is hard to reduce
arbitrary instances of \mpsc{}
to equivalent instances
with a size polynomial in the number
of the connected components of the obligatory subgraph.

\looseness=-1
Notably,
the obligatory subgraph does not necessarily have to be given
by \plb{}s.
It can also arise as part of an application scenario
and given as input,
for example, when reconnecting a sensor network
that has lost connectivity due to sensor faults,
as studied by \citet{RM99}. 

\subsection{Finding obligatory edges}
\label{sec:obl}

To find obligatory edges, we use \plb{}s (see \cref{def:plb}).
Once we have \plb{}s,
we can compute an \emph{obligatory subgraph},
whose edges are contained in at least one optimal solution.

\begin{definition}[obligatory subgraph]\label{def:oblig}
  The \emph{obligatory subgraph~\(G_\ell\)}
  of a graph~\(G=(V,E)\)
  induced by
  \plb{}s~\(\ell\colon V\to\N\)
  consists
  of all vertices of~\(G\) and all \emph{obligatory edges} \(\{u,v\}\) with
  \[\min\{\ell(u),\ell(v)\}\geq w(\{u,v\}).\]
\end{definition}

\noindent
The better the \plb{}s~$\ell$ are,
the more obligatory edges they potentially induce,
thus reducing the number~\(c\) of connected components
of~\(G_\ell\).
Clearly, coming up with good \plb{}s is a challenge on its own.
Yet already the simple \plb{}s in \cref{ex:nn}
may yield obligatory subgraphs
with few connected components in applications:

\begin{example}
  \label[example]{ex:broken-grids}
  Consider the vertex lower bounds~\(\ell\) from \cref{ex:nn}.
  If we arrange sensors in a grid,
  which is the most energy-efficient arrangement of sensors
  for monitoring areas \citep{ZH05,ZEAC09},
  then \(G_\ell\)~has only one connected component.
  The number of connected components
  may increase due to sensor defects
  that disconnect the grid
  or
  due to varying sensor distances within the grid.
  The worst case
  is if the sensors have pairwise distinct
  distances.
  Then,
  \(G_\ell\) might have only one edge,
  joining the closest pair, and
  \(n-1\)~connected components.
\end{example}

\noindent
Alternatively,
an obligatory subgraph may arise
as the result of a sensor network
that lost connectivity due to faulty
sensors and has to be reconnected
at minimum extra cost.

\subsection{A fixed\hyp parameter algorithm}
\label{sec:outl}
The number~\(c\) of connected components in~\(G_\ell\)
can easily be  exploited in an exact \(O(n^{2c})\)-time
algorithm for \miPoSyCo{},\footnote{To connect the \(c\)~components of~\(G_\ell\),
  one has to add \(c-1\)~edges.
  These have at most \(2c-2\)~endpoints.
  One can try all \(n^{2c-2}\)~possibilities
  for choosing these endpoints and
  check each resulting graph for connectivity in \(O(n+m)\subseteq O(n^2)\)~time.
  Altogether, this proves containment of \miPoSyCo{} parameterized by~$c$ in the class~XP.
  }
which runs in polynomial time for constant~\(c\),
yet is inefficient already for small values of~\(c\).
We develop, among others, a randomized algorithm
that runs in polynomial time even for \(c\in O(\log n)\):
\begin{theorem}\label{thm:ccs}
  \miPoSyCo{} with \plb{}s~\(\ell\) is solvable
  \begin{enumerate}[(i)]
  \item\label{thm:ccs1} in
    \(O(\ln 1/\varepsilon \cdot (4e^2/\sqrt{2\pi})^c
  \cdot (9^cm+4^cnm+nm\log n))\)~time by a randomized algorithm with an error probability at most~\(\varepsilon\)
    for any given~$\varepsilon$ with \(0<\varepsilon<1\), and
  \item\label{thm:ccs2} in \(c^{O(c\log c)}\cdot n^{O(1)}\)~time by a deterministic algorithm, 
  \end{enumerate}
  where \(c\)~is the number of connected components of the obligatory subgraph~\(G_\ell\).
\end{theorem}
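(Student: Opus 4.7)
The approach is color coding combined with a subset dynamic program. First, I would establish a structural reduction: every obligatory edge~$\{u,v\}$ satisfies $w(\{u,v\})\le\min\{\ell(u),\ell(v)\}$ and the lower bounds of \cref{def:plb} force every solution to pay at least $\ell(u)$ at~$u$ and $\ell(v)$ at~$v$, so adding such an edge to any solution is cost-neutral; on the other hand, each additional non-obligatory edge can only increase the per-vertex maxima. Thus we may assume the optimum consists of the full obligatory subgraph~$G_\ell$ together with a set~$F'$ of exactly $c-1$ non-obligatory edges forming a spanning tree on the contracted component graph, and the set $X$ of endpoints of~$F'$ contains at most $k:=2(c-1)$ vertices.

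Next I would apply color coding to the ``target'' set~$X$. Draw a uniformly random coloring $\phi\colon V\to[k]$; by Stirling's formula, with probability at least $k!/k^k\ge\sqrt{2\pi k}\,e^{-k}$ the vertices of~$X$ receive pairwise distinct colors, yielding a \emph{colorful} coloring. Conditioning on such a coloring, it suffices to find the cheapest \emph{colorful} solution, i.e.\ one whose $c-1$ added edges use pairwise distinct colors at their endpoints. I would compute this by a Dreyfus--Wagner-style subset dynamic program whose state tracks the set of so-far-merged components together with the set of colors already consumed by chosen endpoints, and whose transitions either merge two partial solutions or extend one by a new non-obligatory edge through two fresh colors. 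The non-additive vertex-max cost is handled by an upfront guess, for each~$v\in X$, of the heaviest non-obligatory edge ultimately incident to~$v$: only this maximum enters the goal function beyond the already-paid obligatory contribution at~$v$, so once it is fixed, each DP transition carries a precomputable additive cost. The extra guessing inflates the running time by only an $n^{O(1)}$ factor, so the DP runs in $2^{O(c)}\cdot n^{O(1)}$ time per coloring.

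Iterating the coloring-plus-DP step $\lceil\ln(1/\varepsilon)\cdot k^k/k!\rceil$ times and keeping the best answer brings the failure probability below~$\varepsilon$; substituting Stirling's bound gives the running time in~(\ref{thm:ccs1}). For part~(\ref{thm:ccs2}), I would derandomize by replacing the random colorings with an $(n,k,k)$-perfect hash family of size $e^k k^{O(\log k)}\log n = c^{O(c\log c)}\log n$ (Naor, Schulman, and Srinivasan) and running the DP once per hash function. The main obstacle I anticipate is reconciling the non-additive, per-vertex maximum cost with the additive structure of the Steiner-tree-style DP: the ``heaviest incident weight'' guess linearises the transitions, but one must enforce consistency across subproblems so that no endpoint is later paired with an edge heavier than its guessed value, which requires care in how the guessed maxima are carried inside the DP state or verified at the end.
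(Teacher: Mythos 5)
Your overall architecture---reduce to choosing at most \(2c-2\) endpoints of \(c-1\) connecting edges, color-code these endpoints, solve each colored instance by a dynamic program, and derandomize with a perfect hash family---is the same as the paper's, and your idea of tracking the set of already-merged components of~\(G_\ell\) inside the DP state is a legitimate alternative to the paper's trick of coloring the components with pairwise disjoint color sets (it is that trick, not the coloring itself, which guarantees in the paper that a colorful subgraph actually connects all components; your component-set state achieves the same). The genuine gap is that the dynamic program coping with the non-additive per-vertex maximum is never constructed, and this is precisely where the technical substance of the paper's proof lies: in \cref{lem:mcpu} the table \(D[v,q,C']\) carries a witness vertex~\(q\) fixing the payment \(\max\{\ell(v),w(\{v,q\})\}\) of the current root~\(v\), constraint~\eqref{eq:mcpu-constr} forbids ever attaching to~\(v\) an edge heavier than this witness, the split rule subtracts one copy of \(v\)'s payment to avoid double counting, and the leaf rule demands \(w(\{u,q'\})\geq w(\{u,v\})\) and \(w(\{v,q\})\geq w(\{u,v\})\). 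Your one concrete device for this---an ``upfront guess, for each \(v\in X\), of the heaviest non-obligatory edge ultimately incident to~\(v\)''---does not cost only an \(n^{O(1)}\) factor: the guesses are independent for up to \(2c-2\) endpoints, giving \(n^{\Theta(c)}\) combinations, i.e.\ no better than the trivial \(n^{2c}\)-time algorithm the theorem is meant to beat. The alternative you mention (carrying the guessed maxima in the DP state and enforcing consistency across merges, which is needed exactly when several chosen edges share an endpoint) is the missing construction, and you explicitly leave it open; without it the claimed \(2^{O(c)}\cdot n^{O(1)}\) time per coloring, and hence both parts of the theorem, are unsupported.

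Two secondary points. First, the bound in \cref{thm:ccs}\eqref{thm:ccs1} is specific: the base \(36=4\cdot 9\) comes from the paper's enumeration of the component budgets \(c_1,\dots,c_c\) (an \(O(4^c/\sqrt c)\) factor) and from the \(3^{|C|}\)-type subset splitting in the \MCCS{} dynamic program, so even after filling the DP gap your variant would need its own accounting to recover (or match) this bound; your Stirling calculation only covers the \(e^{2c}/\sqrt{2\pi k}\) repetition count. Second, your derandomization via a single global \((n,2c-2)\)-perfect hash family is sound and in fact simpler than the paper's per-component families, comfortably within the \(c^{O(c\log c)}\cdot n^{O(1)}\) bound of part~\eqref{thm:ccs2}---but again only once the underlying deterministic DP exists.
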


\begin{remark}
The algorithms behind \cref{thm:ccs} work for \emph{any} \plb{}s~$\ell$.
Thus, any (heuristical) approach improving on the \plb{}s will directly (and provably) improve the performance of the algorithms.

The deterministic algorithm
in \cref{thm:ccs}\eqref{thm:ccs2}
is primarily of theoretical interest
because it classifies \miPoSyCo{}
as \emph{fixed-parameter tractable} parameterized by~\(c\).
Practically,
the randomized algorithm
in \cref{thm:ccs}\eqref{thm:ccs1}
is much easier to implement
and it is the one that we experimentally
evaluate in \cref{sec:ccs-exp}.

Note that the parameter~$c$ is upper-bounded by~$n$.
Thus, \cref{thm:hard}\eqref{part:eth-hard} also implies that,
under ETH, there is no~$2^{o(c)} (n+m)^{O(1)}$-time algorithm for \miPoSyCo{},
whereas our randomized algorithm in \cref{thm:ccs}\eqref{thm:ccs1} has running time~$2^{O(c)} n^{O(1)}$ for constant error probability~$\varepsilon$.

The number of connected components
of obligatory subgraphs
has recently also been exploited
in fixed-parameter algorithms for problems
of servicing links in transportation networks
\citep{GWY17,SBNW11,SBNW12,BFTxx},
which led to practical results \citep{BKS17,BFTxx}.
\end{remark}

\noindent
We will now prove \cref{thm:ccs}.
The proof also yields the following
deterministic algorithm for \miPoSyCo{}.
It is much faster than
the trivial algorithm
enumerating all of the possibly \(n^{n-2}\)~spanning trees.
Moreover,
\Cref{thm:hard}\eqref{part:eth-hard} shows that
it is asymptotically optimal:

\begin{proposition}\label{prop:exalg}
  \miPoSyCo{} can be solved in \(O(3^{n}\cdot m)\)~time.
\end{proposition}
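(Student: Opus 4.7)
My plan is to solve \miPoSyCo{} by a Dreyfus--Wagner-style subset dynamic program.

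First, I would argue that some optimum is a spanning tree: starting from any optimal connected spanning subgraph that contains a cycle, deleting any cycle edge preserves connectivity, and since every vertex's edge set only shrinks, no vertex's cost can increase. Hence it suffices to optimize over spanning trees of~$G$.

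For each pair $(v, w)$, where $v \in V$ and $w$ is the weight of some edge of $G$ incident to $v$, I would maintain a DP table
\[
  D_{v,w}[S] \;=\; \min\Bigl\{\; \textstyle\sum_{u \in S \setminus \{v\}} r_u^T \;:\; T \text{ spanning tree of } G[S],\ v \in S,\ r_v^T \leq w\;\Bigr\},
\]
where $r_u^T = \max_{e \in T,\, e \ni u} w(e)$. The anchor's own contribution is deliberately omitted so that gluing two subtrees at~$v$ becomes additive. Base cases are $D_{v,w}[\{v\}] = 0$ and, for a single edge, $D_{v,w}[\{v,u\}] = w(\{v,u\})$ whenever $w(\{v,u\}) \leq w$. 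The main recursion glues two subtrees sharing only the anchor,
\[
  D_{v,w}[S] \;=\; \min_{\substack{S_1 \cup S_2 = S,\ S_1 \cap S_2 = \{v\} \\ |S_1|, |S_2| \geq 2}} D_{v,w}[S_1] + D_{v,w}[S_2],
\]
and is augmented by a leaf-attachment rule (analogous to Dreyfus--Wagner's anchor-relocation step) that handles trees in which~$v$ has degree~$1$. The final optimum is read off as $\min_{(v,w)} (D_{v,w}[V] + w)$, where $+w$ re-adds the anchor's own cost.

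For the running time, the standard Dreyfus--Wagner counting shows that for each fixed~$(v,w)$ the total work across all $S \ni v$ is $\sum_{S \ni v} 2^{|S|-1} = O(3^{n-1})$. Since~$w$ need only range over weights of edges incident to~$v$, the number of $(v,w)$-pairs is $\sum_v \deg(v) = 2m$, giving overall $O(m \cdot 3^n) \subseteq O(3^n(m+n))$ on connected inputs. The main obstacle is the non-additivity of the MinPSC cost: when a spanning tree is split at~$v$, the anchor's contribution becomes the \emph{maximum} rather than the sum of its contributions in the two pieces. Parametrising by the anchor-radius~$w$ and omitting~$v$'s cost from the DP entry is precisely what restores additivity and allows the Dreyfus--Wagner enumeration to go through at the claimed cost.
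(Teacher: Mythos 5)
Your proposal is correct and is essentially the paper's own argument: the paper proves \cref{prop:exalg} by instantiating its \MCCS{} dynamic program (\cref{lem:mcpu}) with $\ell\equiv 0$ and every vertex given its own color, which is exactly your anchored subset DP---the entries $D[v,q,C']$ use a ``budget edge'' $\{v,q\}$ at the anchor and combine a split-at-the-anchor rule with a leaf-attachment rule, just as you describe. The only difference is bookkeeping: the paper makes the anchor pay exactly $\max\{\ell(v),w(\{v,q\})\}$ inside each entry (subtracting one copy when gluing two pieces), whereas you omit the anchor's cost and re-add the budget $w$ at the end, an equivalent reformulation with the same $O(3^n\cdot(m+n))$ accounting.
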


\noindent
Like some known approximation algorithms
for \mpsc{} \citep{HW16,ACM+06},
our algorithms in \cref{thm:ccs}
work by adding edges to~\(G_\ell\)
in order to connect its \(c\)~connected components.
In contrast to the known approximation algorithms,
our algorithms will find
an \emph{optimal} set of edges to add.
We describe our algorithm
in terms of a \emph{padded} version~\(\pG{}\) of the input graph~\(G\),
in which each connected component of~\(G_\ell\) is turned into a clique.
In the padded graph,
it is sufficient to search for connected subgraphs of~\(\pG{}\)
that contain at least one vertex of each connected component
of~\(G_\ell\):
We can then add the edges in~\(G_\ell\) to such subgraphs
in order to obtain a connected spanning subgraph of~\(G\).
This simplifies the problem.

\begin{definition}[padded graph, components]\label{def:padded}
  Let \(\ell\colon V\to\N\) be \plb{}s
  for a graph~\(G=(V,E)\)
  with edge weights~\(w\colon E\to\N\).
  We denote the \(c\)~connected components
  of the obligatory subgraph~\(G_\ell\) by~\(G_\ell^1,G_\ell^2,\dots,G_\ell^c\).
    
  The \emph{padded graph}~\(\pG{}=(V,\pE{})\)
  is a supergraph of~\(G\) with the edge set~\(\pE=E\cup \ppE\), where
  \[
    \ppE:=\{\{u,v\}\subseteq V\mid \text{$u$ and $v$ are in the same component of~\(G_\ell\)}\},
  \]
  and edge weights
  \[
    \pw\colon\pE{}\to\N,\{u,v\}\mapsto
    \begin{cases}
      0&\text{ if }\{u,v\}\in\ppE\text{ and}\\
      w(\{u,v\})&\text{ otherwise}.
    \end{cases}
  \]
\end{definition}

\paragraph{Algorithm outline.}
To solve a \mpsc{} instance~\((G,w)\)
with \plb{}s~\(\ell\colon V\to\N\),
we have to add \(c-1\)~edges to~\(G_\ell\)
in order to connect its \(c\)~connected components.
These edges have at most \(2c-2\)~endpoints.
Thus,
we need to find a %
connected subgraph in~\(\pG{}\) that
\begin{itemize}
	\item contains at most \(2c-2\)~vertices,
	\item contains at least one vertex of each connected component of~\(G_\ell\), and
	\item minimizes the total
          cost increase compared %
          to
          the vertex lower bounds $\ell\colon V\to\N$.
\end{itemize}
We will do this using the color-coding technique
introduced by \citet{AYZ95}:
randomly color the vertices of~\(\pG{}\)
using at most \(2c-2\)~colors
and then search for connected subgraphs of~\(\pG{}\)
that each contain exactly one vertex of each color
and in which the total cost increase
compared to the vertex lower bounds~$\ell\colon V\to\N$
is minimized.
Formally,
we will solve the following problem on~\(\pG{}\).

\optprob{Min-Power Increment Colorful Connected Subgraph (\MCCS{})}%
{\label[problem]{prob:mcpu}A connected undirected graph~$G = (V,E)$, edge weights~$w\colon E \to \N$, vertex colors~\(\col\colon V\to\N\), a function~\(\ell\colon V\to\N\), and a color subset~\(C\subseteq\N\).}%
{Compute a connected subgraph~\(T=(W,F)\) of~\(G\) such that
  \(\col\) is a bijection between \(W\) and~\(C\) and such that $T$~minimizes
  \begin{equation}
    \sum_{v \in W} \max \Bigl\{ 0, \max_{\{u,v\}\in F}w(\{u,v\})-\ell(v) \Bigr\}.\label{eq:mccs}
  \end{equation}
}

Note that,
in order to connect the components of~$G_\ell$
using \MCCS{},
we cannot simply color
the vertices of the input graph~\(G\)
\emph{completely} randomly:
One component of~\(G_\ell\) could
contain all colors and, thus,
a connected subgraph containing all colors
does not necessarily connect the components of~\(G_\ell\).
Instead,
we employ a trick that was previously applied
mainly heuristically in algorithm engineering in order to
increase the success probability of color-coding algorithms \citep{BBF+11,BHK+10,DSG+08}:
Since we know that our sought subgraph
contains at least one vertex of each connected
component of~\(G_\ell\),
we color the connected components of~\(G_\ell\)
using
pairwise disjoint color sets.
Herein,
we first ``guess'' the number~\(\nc_i\) of vertices
that the sought subgraph will contain
of each connected component~\(G_\ell^i\) of~\(G_\ell\)
and use \(\nc_i\)~colors to color each component~\(G_\ell^i\).
We thus arrive at the following algorithm for \mpsc{}:

\begin{enumerate}
	\item construct \(\pG{}\) (in which each connected component of~$G_\ell$ is a clique)
	\item Repeat a certain number of times:
	\begin{enumerate}
        \item color the vertices randomly so that
          the connected components of~$G_\ell$
          get pairwise disjoint colors.
        \item solve \MCCS{} on the colored graph.
	\end{enumerate}
      \end{enumerate}

      \noindent
It is formalized as follows:

\begin{algorithm}[for \mpsc{}]  \label[algorithm]{alg:ccs}
  \leavevmode

  \noindent
  \textit{Input:}
    A \mpsc{} instance~\(I=(G,w)\),
    \plb{}s~\(\ell\colon V\to\N\) for~\(G=(V,E)\), and an upper bound~\(\varepsilon\) on the error probability,
    where $0<\varepsilon<1$.

    \noindent
    \textit{Output:}
    With probability at least~\(1-\varepsilon\)
    an optimal
    solution for~\(I\).

  \begin{compactenum}
  \item \(c\gets{}\)number of connected components of the obligatory subgraph~\(G_\ell\).
  \item\label{lin:comploop} \textbf{for each} \(\nc_1,\nc_2,\dots,\nc_c\in\N\setminus\{0\}\) such that \(\sum_{i=1}^c\nc_i\leq 2c-2\) \textbf{do}:
    
  \item\label{lin:colchoice} \quad choose pairwise disjoint~\(C_i\subseteq\{1,\dots,2c-2\}\)
    with \(|C_i|=\nc_i\) for \(i\in\{1,\dots,c\}\).
  \item\label{lin:repeat} \quad\textbf{repeat} \(t:=\lceil\ln\varepsilon/\ln(1- \prod_{i=1}^c\nc_i!/\nc_i^{\nc_i})\rceil\) \textbf{times}:%
    \footnote{Repeat once if the logarithm is undefined, that is,
      if \(\nc_i=1\) for all \(i\in\{1,\dots,c\}\).}
    
  \item\label{lin:randcol} \qquad \textbf{for each} \(i\in\{1,\dots,c\}\), randomly color the vertices of component \(G_\ell^i\) of~\(G_\ell\)

    \qquad\quad using colors from~\(C_i\), let the resulting coloring be \(\col\colon V\to\N\).
  \item\label{lin:onemcpusol} \qquad Solve \MCCS{} instance \(\pI:=(\pG,\pw,\col,\ell,C)\) as described in
    \cref{sec:dp}.
    
  \item\label{lin:mcpusol} let \(T=(W,F)\) be the best \MCCS{} solution found in any of the repetitions.
    
  \item\label{lin:consF}\label{lin:retgraph} \textbf{return} \(T'=(V,(F\setminus \ppE)\cup E_\ell)\).
  \end{compactenum}
\end{algorithm}

\noindent
We prove the correctness of \cref{alg:ccs}
in \cref{sec:corr}.
Then, in \cref{sec:dp},
we show how to solve the \MCCS{} instance
in line~\ref{lin:onemcpusol} of~\cref{alg:ccs}.
Finally,
in \cref{sec:rt},
we analyze the running time of
\cref{alg:ccs}
and also show how to derandomize it
to complete the proof of \cref{thm:ccs}.
\subsection{Correctness of \cref{alg:ccs}}
\label{sec:corr}
We will now prove the correctness of \cref{alg:ccs}.
First, with the following lemma,
we prove that,
if \cref{alg:ccs}
chooses a suitable coloring
in line~\ref{lin:randcol},
then the \MCCS{} instance~\(\pI\)
solved in line~\ref{lin:onemcpusol}
has a solution
of cost at most \(\opt(I)-\sum_{v\in V}\ell(v)\).

\begin{lemma}\label{lem:mpsc-mcpu}
  Let \(I=(G,w)\)~be a \miPoSyCo{} instance,
  let \(\ell\colon V\to\N\)~be
  \plb{}s
  for~\(G=(V,E)\),
  and let
  \(c\)~be the number of connected components of~\(G_\ell=(V,E_\ell)\).
  Then, there is an optimal solution~\(T=(V,F)\) for~\(I\)
  such that
    \begin{enumerate}[(i)]
    \item the set~\(W\subseteq V\) of vertices
      incident to an edge in~\(F\setminus E_\ell\)
      has at most \(2c-2\)~vertices, and
    \item for \(C=\{1,\dots,|W|\}\)
      and any coloring \(\col\colon V\to C\)
      inducing a bijection between~\(W\) and~\(C\),
      there is a solution~\(T'=(W,F')\)
      to the \MCCS{} instance~\((\pG{},\pw{},\col,\ell,C)\)
      with cost at most
      \[\opt(I)-\smashoperator{\sum_{v\in V}}\ell(v).\]
    \end{enumerate}
\end{lemma}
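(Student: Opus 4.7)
The plan is to start from an arbitrary optimal solution~$T^\ast = (V, F^\ast)$ of~$I$ and massage it into one with the desired structure. First I would observe that $(V, F^\ast \cup E_\ell)$ is again optimal: every obligatory edge~$\{u,v\}$ satisfies $w(\{u,v\}) \le \min\{\ell(u),\ell(v)\}$ by definition, while the \plb{} property guarantees that each of~$u$ and~$v$ already pays at least~$\ell(u)$ respectively~$\ell(v)$ in~$T^\ast$. Hence adding the obligatory edges to~$F^\ast$ cannot raise the maximum-weight edge incident to any vertex.

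The next step is to contract each component~$G_\ell^i$ of the obligatory subgraph to a single super-vertex, yielding a multigraph~$H$ on~$c$ vertices whose edges are the images of $F^\ast \setminus E_\ell$ (loops discarded). Since $(V, F^\ast \cup E_\ell)$ is connected, so is~$H$, and thus $H$ admits a spanning tree with $c-1$~edges. Lifting these back to~$G$ yields a set $F_1 \subseteq F^\ast \setminus E_\ell$ of size $c-1$ such that $T := (V, E_\ell \cup F_1)$ is a connected spanning subgraph of~$G$; being a subgraph of $(V, F^\ast \cup E_\ell)$, it has no larger vertex costs and is therefore still optimal. The set~$W$ of endpoints of~$F_1$ then has at most $2(c-1) = 2c-2$~vertices, which establishes~(i). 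Moreover, each edge of~$F_1$ crosses between two different components of~$G_\ell$, so whenever $c \ge 2$ every component contributes at least one vertex to~$W$.

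For~(ii), I would construct a \MCCS{} solution directly from~$T$. Inside each component~$G_\ell^i$, the vertices of $W \cap V(G_\ell^i)$ lie in a clique of the padded graph~$\pG{}$, so I can pick a set~$F_{\mathrm{pad}}$ of zero-weight padding edges that render $W \cap V(G_\ell^i)$ connected within every component. Setting $F' := F_1 \cup F_{\mathrm{pad}}$, the subgraph $T' = (W, F')$ is connected in~$\pG{}$, and since~$\col$ is a bijection between~$W$ and~$C$ by assumption, $T'$ is feasible for the \MCCS{} instance $(\pG{}, \pw{}, \col, \ell, C)$.

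What remains is cost bookkeeping. For $v \in V$, let $M_v$ denote the maximum weight of any edge of~$F_1$ incident to~$v$, with $M_v := 0$ if there is no such edge. Every edge of~$E_\ell$ incident to~$v$ has weight at most~$\ell(v)$, while the \plb{} property applied to~$T$ gives $\max_{\{u,v\} \in E_\ell \cup F_1} w(\{u,v\}) \ge \ell(v)$; combining these two facts, the cost of~$v$ in~$T$ equals $\max\{\ell(v), M_v\}$. For $v \notin W$ this contribution is just~$\ell(v)$, whereas for $v \in W$ it coincides with the contribution of~$v$ to the \MCCS{} cost of~$T'$, since every edge in~$F_{\mathrm{pad}}$ has weight zero. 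Summing over~$V$ and rearranging yields that the cost of~$T'$ equals $\opt(I) - \sum_{v \in V \setminus W} \ell(v)$, as required. The only delicate step is the reduction to $c-1$ non-obligatory edges via a spanning tree of the quotient; everything else is careful accounting against the vertex lower bound.
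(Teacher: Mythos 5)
Your proposal follows essentially the same route as the paper's proof: restrict an optimal solution to the obligatory edges plus $c-1$ cross edges (your quotient/spanning-tree argument is just a more explicit version of the paper's ``take an optimal solution containing $E_\ell$ with a minimum number of non-obligatory edges''), then build the \MCCS{} solution on~$W$ from those cross edges plus cheap intra-component edges and do the same cost accounting against~$\ell$.

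One step is however too strong as written: you claim you can connect $W \cap V(G_\ell^i)$ inside each component using \emph{zero-weight padding edges} only. The padded graph~\(\pG{}\) has zero-weight edges only between pairs of vertices of a component that are \emph{not} joined by an obligatory edge, and $T'$ is not allowed to route through vertices outside~$W$. So if, say, a component's only two $W$-vertices $a,b$ are joined by an obligatory edge (e.g.\ the component is a single obligatory edge, or an obligatory triangle with exactly two $W$-vertices), there is no zero-weight edge between them and your $F_{\mathrm{pad}}$ cannot exist. The repair is immediate and is exactly what the paper's definition of~$F'$ does: within a component, use any edge of~\(\pE{}\) of weight at most $\min\{\ell(u),\ell(v)\}$ (zero-weight padding edges \emph{or} obligatory edges). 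Since such weights never exceed $\ell(v)$, your bookkeeping identity ``cost of $v$ in $T'$ equals $\max\{\ell(v),M_v\}$'' is unaffected, and the rest of your argument, including the connectivity via the quotient tree, goes through.
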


\begin{proof}
  (i)
  Let \(T=(V,F)\)~be an optimal solution for~\((G,w)\)
  that contains all edges of~\(G_\ell=(V,E_\ell)\)
  and a minimum number of edges of~\(E\setminus E_\ell\).
  In order to
  connect the \(c\)~connected components of~\(G_\ell\),
  the graph~$T$ contains \(c-1\)~edges in~\(E\setminus E_\ell\).
  These can have at most \(2c-2\)~endpoints.
  Thus, \(|W|\leq 2c-2\).

  (ii)
  Consider the graph~\(T'=(W,F')\)
  with the edge set
  \begin{align}
    F':={}&\{\{u,v\}\subseteq W\mid\{u,v\}\in F\}\cup{}
            \{\{u,v\}\subseteq W\mid \{u,v\}\in\pE\text{ and }\pw(\{u,v\})=0\}.\label{f'}
  \end{align}
  \looseness=-1
  We show that \(T'\)~is a solution to
  the \MCCS{} instance~\(\pI=(\pG{},\pw{},\col,\ell,C)\).
  We first analyze its cost.
  By \cref{def:plb},
  \(\ell(v)\leq\max_{\{u,v\}\in F}w(\{u,v\})\)
  for all \(v\in V\).
  By \cref{def:padded},
  \(\pw(\{u,v\})\leq w(\{u,v\})\)
  if \(\{u,v\}\in F\) and
  \(\pw(\{u,v\})=0\) if \(\{u,v\}\in F'\setminus F\).
  Thus,
  the cost of~\(T'\)
  as a solution to~\(\pI\) is
  \begin{align*}
    \sum_{v \in W} \max \Bigl\{ 0, \max_{\{u,v\}\in F'}\pw(\{u,v\})-\ell(v) \Bigr\}
    &\leq    \sum_{v \in V} \max \Bigl\{ 0, \max_{\{u,v\}\in F'}\pw(\{u,v\})-\ell(v) \Bigr\}\\
      \leq\sum_{v \in V} \max \Bigl\{ 0, \max_{\{u,v\}\in F}w(\{u,v\}) -\ell(v)\Bigr\}
      &\leq\sum_{v \in V} \Bigl(\max_{\{u,v\}\in F}w(\{u,v\}) -\ell(v) \Bigr)
    =\opt(I)-\smashoperator{\sum_{v \in V}}\ell(v).
  \end{align*}
  By assumption,
  \(\col\) is a bijection between~\(W\) and~\(C\).
  Thus,
  in order to show that \(T'\)~is a solution to~\(\pI\),
  it remains to show that it is
  connected.  
  Towards a contradiction,
  assume that \(T'=(W,F')\)~is not connected.
  Choose~\(u,v\in W\)
  that are disconnected in~\(T'\)
  and have minimum distance in~\(T\),
  as measured
  as the number
  of edges on a shortest \(u\)-\(v\)-path~\(p\) in~\(T\).
  By \eqref{f'},
  all edges of~\(T=(V,F)\) between vertices in~\(W\)
  are also in~\(T'\).
  Thus, \(p\)~has no inner vertices in~\(W\).
  By choice of~\(W\),
  it follows that
  all edges of~\(p\) are in~\(E_\ell\)
  and, consequently,
  \(u\) and \(v\) are
  in the same connected component of~\(G_\ell\).
  By \cref{def:padded}, \(\pw(\{u,v\})=0\).
  By \eqref{f'},
  we get \(\{u,v\}\in F'\),
  contradicting our assumption.
\end{proof}

\noindent
The next lemma is the converse to \cref{lem:mpsc-mcpu}:
we show that \cref{alg:ccs}
in line~\ref{lin:consF}
recovers a solution to~\(I\)
of cost at least \(\opt(\pI)+\sum_{v\in V}\ell(v)\)
from the \MCCS{} instance~\(\pI\) solved in
line~\ref{lin:onemcpusol}.
\begin{lemma}\label{lem:mcpu-mpsc}
  \label{rem:dp}
  Let \(I:=(G=(V,E),w)\)~be a \miPoSyCo{} instance,
  let \(\ell\colon V\to\N\)~be \plb{}s,
  and let \(\col\colon V\to C\)~be a
  coloring such that, for \(i\ne j\), the sets of colors of vertices
  in the connected components~\(G_\ell^i\) and~\(G_\ell^j\)
  of~\(G_\ell=(V,E_\ell)\) are disjoint.

  If \(T=(W,F)\)~is an optimal solution to
  the \MCCS{} instance~\(\pI=(\pG{},\pw{},\allowbreak \col,\ell,C)\),
  then \(T'=(V,(F\setminus\ppE)\cup E_\ell)\) is a solution for~\((G,w)\)
  of cost at most
  \[
    \opt(\pI)+\smashoperator{\sum_{v\in V}}\ell(v).
  \]
\end{lemma}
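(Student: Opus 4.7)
The plan is to verify two things about $T' = (V, (F \cap E) \cup E_\ell)$: it is a connected spanning subgraph of $G$, and its \mpsc{}-cost is at most $\opt(I) + \sum_{v \in V \setminus W} \ell(v)$. Since $E_\ell \subseteq E$ by \cref{def:oblig}, every edge of $T'$ lies in $E$, so the nontrivial steps are connectivity and the cost bound.

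For connectivity, I would first show that every component $G_\ell^i$ contains a vertex of $W$. Indeed, $\col$ maps into $C$, so any $v \in V(G_\ell^i)$ carries color $\col(v) \in C$; the disjoint-color hypothesis then forces the unique vertex of $W$ with color $\col(v)$ to lie inside $G_\ell^i$ (using that $\col$ restricted to $W$ is a bijection onto $C$). Given this, to connect two arbitrary vertices $u, v \in V$ in $T'$, I would route each via $E_\ell$ to some $W$-vertex inside its own component and then connect the two $W$-vertices along a path in the connected graph $T = (W, F)$. Every padding edge $\{x, y\} \in F \setminus E$ encountered along that path has both endpoints in a common component of $G_\ell$ by \cref{def:padded}, so it can be replaced by an $E_\ell$-path between $x$ and $y$. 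The concatenation uses only edges of $(F \cap E) \cup E_\ell$.

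For the cost, I would split $\sum_{v \in V} \max_{\{u,v\} \in F'} w(\{u,v\})$ according to whether $v \in W$. The two facts to use are that $\pw$ and $w$ agree on edges of $E$ by \cref{def:padded}, and that every edge of $E_\ell$ incident to $v$ has weight at most $\min\{\ell(u), \ell(v)\} \leq \ell(v)$ by \cref{def:oblig}. For $v \in W$ this bounds the contribution by $\max\{\ell(v), \max_{\{u,v\} \in F} \pw(\{u,v\})\}$, which is exactly $v$'s term in the \MCCS{}-cost of $T$; for $v \in V \setminus W$, no edge of $F$ is incident to $v$, so only $E_\ell$-edges are, and the contribution is at most $\ell(v)$. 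Summing over both groups, the $W$-terms collapse to $\opt(I)$ by optimality of $T$, and the remaining terms yield $\sum_{v \in V \setminus W} \ell(v)$.

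The main pitfall is the first step: if some component of $G_\ell$ missed $W$ entirely, then vertices of that component outside $W$ would be isolated in $T'$, simultaneously breaking connectivity and leaving the cost bound vacuous on those vertices. This is exactly why the disjoint-color hypothesis is essential and why \cref{alg:ccs} enforces it by drawing the colors of each $G_\ell^i$ from its own palette $C_i$.
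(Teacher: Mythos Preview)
Your argument is correct and follows essentially the same route as the paper's proof: you establish connectivity by showing every component of $G_\ell$ meets $W$ (via the disjoint-color hypothesis) and then replace padding edges by $E_\ell$-paths, and you bound the cost by splitting over $v\in W$ versus $v\in V\setminus W$ using that $E_\ell$-edges at $v$ weigh at most $\ell(v)$ and that padding edges have $\pw$-weight zero. The paper organizes the cost analysis by edge type rather than by vertex, but the content is the same.
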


\begin{proof}
  As required by the definition of \MCCS{} (\cref{prob:mcpu}), \(T\)~is connected and
  contains exactly one vertex of each color in~\(C\).
  Since the color sets of distinct connected components of~\(G_\ell=(V,E_\ell)\) are disjoint,
  \(T\)~contains at least one vertex of each connected component of~\(G_\ell\).
  By construction, \(T'\) contains all edges of~\(G_\ell\)
  and all edges of~\(T\) between different connected components of~\(G_\ell\).
  Thus, \(T'\)~is connected.

  It remains to analyze the cost of~\(T'\)
  as a solution to the \miPoSyCo{} instance~\((G,w)\).
  To this end, let \(F'=(F\setminus\ppE)\cup E_\ell\)
  be the edge set of~\(T'\)
  and
  observe that, by \cref{def:oblig,def:padded},
  \begin{itemize}
  \item for all edges \(\{u,v\}\in F'\cap E_\ell\supseteq F'\setminus F\),
    one has \(\pw(\{u,v\})=0\leq w(\{u,v\})\leq\min\{\ell(u),\ell(v)\}\),
  \item for all edges \(\{u,v\}\in F'\setminus E_\ell\subseteq E\setminus \ppE\),
    one has \(\pw(\{u,v\})=w(\{u,v\})\),
  \end{itemize}
  \looseness=-1
  and that no vertex~\(v\in V\setminus W\)
  is incident to edges in~\(F\).
  Thus, the cost of~\(T'\) as solution to the \miPoSyCo{}
  instance~\((G,w)\) is
  \begin{align*}
    &\sum_{v\in V} \max_{\{u,v\}\in F'}w(\{u,v\})
    \leq\sum_{v \in V} \max \Bigl\{ \ell(v), \max_{\{u,v\}\in F'}w(\{u,v\}) \Bigr\}=\sum_{v \in V} \max \Bigl\{ \ell(v), \max_{\{u,v\}\in F'}\pw(\{u,v\}) \Bigr\}\\
    &\leq\sum_{v \in V} \max \Bigl\{ \ell(v), \max_{\{u,v\}\in F}\pw(\{u,v\}) \Bigr\}
 =\sum_{v\in V}\max\Bigl\{0,\max_{\{u,v\}\in F}\pw(\{u,v\})-\ell(v)\Bigr\}
      +\sum_{v\in V}\ell(v)\\
    &=\sum_{v\in W}\max\Bigl\{0,\max_{\{u,v\}\in F}\pw(\{u,v\})-\ell(v)\Bigr\}
      +\sum_{v\in V}\ell(v)=\opt(\pI)      +\sum_{v\in V}\ell(v).\qedhere
  \end{align*}
\end{proof}

\noindent
In \cref{lem:mcpu-mpsc,lem:mpsc-mcpu},
we have shown how to translate between
optimal solutions of the input \mpsc{} instance~\(I\)
and the optimal solutions of the \MCCS{} instance~\(\pI\)
solved in line~\ref{lin:onemcpusol},
given a suitable vertex coloring.
To prove the correctness of \cref{alg:ccs},
it remains to combine these lemmas
and analyze the probability of a suitable
coloring
in order to show that \cref{alg:ccs}
returns an optimal solution
with probability at least~\(1-\varepsilon\).

\begin{proposition}
  \cref{alg:ccs} is correct.
\end{proposition}

\begin{proof}
  By \cref{lem:mpsc-mcpu},
  there is an optimal solution \(T=(V,F)\) to the \miPoSyCo{}
  instance~\(I=(G,w)\)
  with vertex lower bounds~\(\ell\colon V\to\N\)
  such that the set \(W\subseteq V\) of vertices
  incident to an edge in~\(T\) that is not in~\(G_\ell\)
  satisfies \(|W|\leq 2c-2\),
  where \(c\)~is the number
  of connected components of~\(G_\ell\).
  Thus, in at least one iteration of the loop in line~\ref{lin:comploop},
  we will have that \(\nc_i\)~is the number of vertices of
  the connected component~\(G_\ell^i\) that are contained in~\(W\).
  If any of the colorings in line~\ref{lin:randcol}
  colors the vertices of~\(W\) in \(|W|\)~pairwise distinct colors,
  then, by \cref{lem:mpsc-mcpu},
  the \MCCS{} instance~\(\pI\) solved in
  line~\ref{lin:onemcpusol},
  and therefore the solution
  selected in line~\ref{lin:mcpusol},
  has cost at most~\(\opt(\pI)\leq \opt(I)-\sum_{v\in V}\ell(v)\).
  By \cref{lem:mcpu-mpsc},
  the \miPoSyCo{} solution returned in line~\ref{lin:consF}
  then has cost at most~\(\opt(\pI)+\sum_{v\in V}\ell(v)\leq\opt(I)\),
  implying that it is optimal.
  It remains to analyze
  the probability of a suitable coloring.

  Since line~\ref{lin:colchoice} chooses pairwise disjoint color sets
  for the components~\(G_\ell^i\),
  line~\ref{lin:randcol} colors the vertices of~\(W\)
  in \(|W|\)~pairwise distinct colors if and only if,
  for each \(i\in\{1,\dots,c\}\),
  the \(\nc_i\)~vertices of~\(W\)
  in component~\(G_\ell^i\) are colored in \(\nc_i\)~pairwise distinct colors.
  Call this event~\(A_i\) for \(i\in\{1,\dots,c\}\).
  There are \(\nc_i^{\nc_i}\)~possibilities to color the \(\nc_i\)~vertices of~\(W\)
  in component~\(G_\ell^i\).
  Out of these,
  there are \(\nc_i!\)~possibilities to color them in pairwise distinct colors.
  Thus, \(\Pr[A_i]=\nc_i!/\nc_i^{\nc_i}\) for each component~\(G_\ell^i\).
  Since \(A_i\) and~\(A_j\) are independent,
  the probability that all vertices of~\(W\) get pairwise distinct colors is
  \[
    p:=\Pr\Bigl[\bigcap_{i=1}^cA_i\Bigr]=\prod_{i=1}^c\Pr[A_i]=\prod_{i=1}^c\frac{\nc_i!}{\nc_i^{\nc_i}}.
  \]
  If \(p=1\),
  then
  the only repetition of the loop in line~\ref{lin:repeat}
  yields a suitable coloring.
  If \(p<1\),
  then the probability that none of its \(t\)~iterations
  yields a suitable coloring is
  \((1-p)^t=(1-p)^{\ln\varepsilon/\ln(1-p)}=(1-p)^{\log_{1-p}\varepsilon}=\varepsilon\).
\end{proof}

\noindent
Having shown that
\cref{alg:ccs} is correct,
to prove \cref{thm:ccs},
it remains to analyze the running time
of \cref{alg:ccs} and to derandomize it.
To analyze its running time,
we now show
how to efficiently
solve the \MCCS{} instances
in line~\ref{lin:onemcpusol} of~\cref{alg:ccs}.

\subsection{Solving \MCCS{} in line~\ref{lin:onemcpusol}
of \cref{alg:ccs}}
\label{sec:dp}
In the previous section,
we have shown that \cref{alg:ccs}
is correct.
To carry it out efficiently,
we show in this section how to
solve the \MCCS{} instances in line~\ref{lin:onemcpusol}:

\begin{proposition}
  \label[proposition]{lem:mcpu}
  The \MCCS{} instance
  in line~\ref{lin:onemcpusol} of \cref{alg:ccs}
  can be solved in time
  \[O(3^{|C|}m+ 2^{|C|}nm+nm\log n).\]
\end{proposition}

\noindent
To prove \cref{lem:mcpu},
we use a dynamic programming algorithm inspired
by an algorithm
used for finding signalling pathways
in biological networks~\citep{SIKS06}:
it finds trees containing one vertex
of each color in a vertex-colored graph.
Our case is complicated by the
non-standard goal function~\eqref{eq:mccs} of \MCCS{}
and that we do not want
to compute the graph~\(\pG{}\) explicitly:
its size might be
quadratic in that of~\(G\),
which leads to prohibitively large
running times when working on~\(\pG{}\).%
\footnote{Indeed,
  in the conference version
  of this article \citep{BBNN17},
  we worked directly on~\(\pG{}\),
  which led to a running time of \(O(3^{|C|}n^4)\)
  for carrying out line~\ref{lin:onemcpusol}.
  In experiments this turned out to be inferior
  to CPLEX or even brute force.}
We will thus have to compute
the optimal solution for~\(\pI:=(\pG,\pw,\col,\ell,C)\)
by looking at the input graph~\(G\) only.

In the following,
we will use some simplifying assumptions and conventions,
which clearly do not influence
the optimal solutions to \miPoSyCo{} and \MCCS{}.
\begin{assumption}\label[assumption]{ass}
  For each vertex~\(v\in V\),
  there is a loop~\(\{v\}\in E\) of weight \(w(\{v\})=0\).
  Consequently,
  by \cref{def:padded},
  also \(\{v\}\in\pE\) and \(\pw(\{v\})=0\).
  For any \(e\notin\pE\),
  we assume \(\pw(e)=\infty\).
\end{assumption}

\noindent
To use dynamic programming,
we formally define the subproblems
that we are going to solve
using a recurrence relation.

\begin{definition}[{$P(v,q,C')$, $\Phi(v,q,T)$, $D[v,q,C']$}]
  For any color set~\(C'\subseteq C\)
  and any edge \(\{v,q\}\in \pE\)
  (possibly, \(v=q\)),
  we denote by \(P(v,q,C')\) the subproblem
  of computing
  a feasible solution~\(T=(W,F)\)
  to the \MCCS{} instance~\((\pG,\pw,\col,\ell,C')\)
  that minimizes
  \begin{align*}
    \Phi(v,q,T):=\max\{0,\pw(\{v,q\})-\ell(v)\}+
    \smashoperator{\sum_{v' \in W\setminus\{v\}}} \max \Bigl\{ 0, \max_{\{u,v'\}\in F}\pw(\{u,v'\}) -\ell(v')\Bigr\}
  \end{align*}
  under the constraints that~\(v\in W\) and
  \begin{align}
    \max\{0,\pw(\{v,q\})-\ell(v)\}&\geq \max \Bigl\{ 0, \max_{\{u,v\}\in F}\pw(\{u,v\})-\ell(v) \Bigr\}
                                    \label{eq:mcpu-constr}
  \end{align}
  \looseness=-1
  (such a solution might not exist
  for some choices of~\(v\) and~\(q\)).
  We denote the cost of an optimal solution to~\(P(v,q,C')\)
  by
  \[
    D[v,q,C']:=\min\{\Phi(v,q,T)\mid T\text{ is a feasible solution to }P(v,q,C')\}.
  \]
\end{definition}

\noindent
Note that
the only difference between~\(\Phi(v,q,T)\)
and the goal function~\eqref{eq:mccs} of \MCCS{}
is that
the vertex \(v\)~pays exactly \(\max\{0,\pw(\{v,q\})-\ell(v)\}\).
However,
by constraint~\eqref{eq:mcpu-constr},
\(v\)~still pays at least the heaviest edge incident to~\(v\) in~\(T\).

Since we want to compute~\(D[v,q,C']\)
without looking at~\(\pG\),
the following lemma,
which exploits \cref{ass},
will be helpful.
\begin{lemma}\label{lem:oldweight}
  For each edge~\(\{v,q\}\in\pE\) %
  there is an edge~\(\{v,q'\}\in E\)
  with \(\pw(\{v,q\})=\pw(\{v,q'\})\)
  and 
  \(D[v,q,C']=D[v,q',C']\).
\end{lemma}
\begin{proof}
  If \(\{v,q\}\in E\),
  then choose \(q':=q\).
  Otherwise,
  \(\pw(\{v,q\})=0=\pw(\{v\})\) %
  by \cref{def:padded,ass}.
  Since \(\{v\}\in E\),
  one can choose \(q':=v\).
\end{proof}
\noindent
By \cref{lem:oldweight},
we can compute
the cost of an optimal solution to the \MCCS{} instance~\(\pI\)
as
\begin{equation}
\min_{\{v,q\}\in\pE}D[v,q,C]=\min_{\{v,q\}\in E}D[v,q,C].\label{eq:whereopt}
\end{equation}
For \(|C'|\leq 1\),
the value of \(D[v,q,C']\)~can be easily computed:
\begin{observation}
  \label{obs:onecolor}
  \(D[v,q,\{\col(v)\}]=\max\{0,\pw(\{v,q\})-\ell(v)\}\)
  and \(D[v,q,C']=\infty\) if \(\col(v)\notin C'\).

\end{observation}
\noindent
We now compute
\(D[v,q,C']\)
for \(|C'|\geq 2\).
To this end,
we distinguish three roles
that a vertex~\(v\)
might play in an optimal solution~\(T\)
to \(P(v,q,C')\),
which, without loss of generality,
is a tree. Vertex~\(v\) might be
\begin{itemize}
\item a cut vertex of~\(T\) (\cref{lem:innervert}), or
\item a leaf and not the only vertex of its connected
  component of~\(G_\ell\) in~\(T\) (\cref{lem:leafnotalone}), or
\item a leaf and the only vertex of its
  connected component of~\(G_\ell\) in~\(T\) (\cref{lem:leafalone}).
\end{itemize}
Herein,
note that all
recurrence relations
that we prove for~\(D[v,q,C']\)
will not refer to~\(\pG{}\),
but to the input graph~\(G\)
only,
since we want to avoid the expensive construction of~\(\pG{}\).

\begin{lemma}\label[lemma]{lem:innervert}
  For any \(\{v,q\}\in E\)
  and \(C'\subseteq C\) with \(|C'|\geq 2\),
  \begin{equation}
    D[v,q,C']\leq D_1[v,q,C']:=\min
    \begin{Bmatrix*}[l]
      D[v,q,C_1]+D[v,q,C_2]-\max\{0,\pw(\{v,q\})-\ell(v)\}\\
      \text{\quad for all } C_1\subsetneq C'\text{ and }C_2\subsetneq C'\\
      \text{\qquad such that }C_1\cup C_2=C'
      \text{ and }C_1\cap C_2=\{\col(v)\}
    \end{Bmatrix*}.
    \label{eq:dp1}
  \end{equation}
  If there is an optimal solution~\(T\)
  to~$P(v,q,C')$
  with cut vertex~$v$,
  then~$D[v,q,C'] = D_1[v,q,C']$.
\end{lemma}

\begin{proof}
  (\(\leq\))
  By taking the unions of the vertex sets
  and edge sets
  of two optimal solutions
  for \(P(v,q,C_1)\) and \(P(v,q,C_2)\)
  with \(C_1\cup C_2=C'\) and \(C_1\cap C_2=\{\col(v)\}\),
  we get a feasible solution~\(T'\)
  for \(P(v,q,C')\).
  Since
  these are edge-disjoint and intersect only in~\(v\),
  we get
  \begin{align*}
    D[v,q,C']\leq \Phi(v,q,T')&=D[v,q,C_1]+D[v,q,C_2]-\max\{0,\pw(\{v,q\})-\ell(v)\}.
  \end{align*}

    (\(\geq\))
  Solution~\(T\) decomposes into
  two proper subgraphs~\(T_1\) and~\(T_2\)
  only intersecting in~\(v\).
  For~\(i\in\{1,2\}\),
  let \(C_i\)~be the set of colors of the vertices of~\(T_i\).
  Then, one has
  \(C_1\subsetneq C'\), \(C_2\subsetneq C'\),
  \(C_1\cup C_2=C'\), and
  \(C_1\cap C_2=\{\col(v)\}\).
  For \(i\in\{1,2\}\),
  \(T_i\) is a feasible solution to
  \(P(v,q,C_i)\).
  Thus,
  \begin{align*}
    D[v,q,C']= \Phi(v,q,T)&=\Phi(v,q,T_1)+\Phi(v,q,T_2)-\max\{0,\pw(\{v,q\})-\ell(v)\}\\
                          &\geq D(v,q,C_1)+D[v,q,C_2]-\max\{0,\pw(\{v,q\})-\ell(v)\}.\qedhere
  \end{align*}
\end{proof}

\begin{lemma}
  \label[lemma]{lem:leafnotalone}
  For any \(\{v,q\}\in E\)
  and \(C'\subseteq C\) with \(|C'|\geq 2\)
  such that the connected component of~\(G_\ell\)
  containing~\(v\)
  contains a vertex
  with color in~\(C'\setminus\{\col(v)\}\),
  \begin{equation}
    D[v,q,C']\leq D_2[v,q,C']:=\min_{\{u,q'\}\in E}D[u,q',C'\setminus\{\col(v)\}]+\max\{0,\pw(\{v,q\})-\ell(v)\}.\label{eq:dp2}
  \end{equation}
  If there is an optimal solution~\(T\)
  to~$P(v,q,C')$
  with a leaf~$v$,
  then~$D[v,q,C'] = D_2[v,q,C']$.
\end{lemma}

\begin{proof}
  (\(\geq\))
  \(T'=T-\{v\}\)
  is a feasible solution
  with cost
  at least~\(D[u,q',C'\setminus\{\col(v)\}]\)
  for~\(P(u,q',C'\setminus\{\col(v)\})\),
  where \(u\)~is any vertex in~\(T'\)
  and \(\{u,q'\}\in\pE\)~is an edge
  incident to~\(u\) in~\(T'\)
  that maximizes~\(\pw(\{u,q'\})\).
  Thus, using \cref{lem:oldweight}, we get
  \begin{align*}
    D[v,q,C']&= \Phi(u,q',T')+\max\{0,\pw(\{v,q\})-\ell(v)\}\\
             &\geq D[u,q',C'\setminus\{\col(v)\}]+\max\{0,\pw(\{v,q\})-\ell(v)\}\\
    &\geq \min_{\{u',q''\}\in E}D[u',q'',C'\setminus\{\col(v)\}]+\max\{0,\pw(\{v,q\})-\ell(v)\}.
  \end{align*}
  (\(\leq\))
  Let \(\{u,q'\}\in E\)~be
  an edge minimizing~\(D[u,q',C'\setminus\{\col(v)\}]\),
  let \(T'\)~be an optimal solution
  to $P(u,q',C\setminus\{\col(u)\})$,
  and \(u^*\)~be a vertex in~\(T'\)
  such that \(\col(u^*)\in C'\setminus\{\col(v)\}\)
  and \(u^*\)~is in the same connected component of~\(G_\ell\)
  as~\(v\).
  By \cref{def:padded},
  \(\{u^*,v\}\in \pE\) and \(\pw(\{u^*,v\})=0\).
  Thus,
  adding this edge to~\(T'\), we get
  a connected subgraph~\(T^*=(W,F)\) of~\(\pG\)
  containing all colors of~\(C'\).
  It is a feasible solution for~\(P(v,q,C')\)
  since \(T^*\)~satisfies constraint~\eqref{eq:mcpu-constr}, that is,
  \begin{align*}
    \max\{0,\pw(\{v,q\})-\ell(v)\}&\geq 0= \max\{0,\pw(\{u^*,v\})-\ell(v)\}= \max \Bigl\{ 0, \max_{\{u,v\}\in F}\pw(\{u,v\})-\ell(v) \Bigr\},
  \end{align*}
  where the last equality is due to
  the fact that \(\{u^*,v\}\)~is the
  only edge incident to~\(v\) in~\(T^*\).
  Thus,
  \begin{align*}
    D[v,q,C']\leq\Phi(v,q,T^*)&=\Phi(u,q',T')+\max\{0,\pw(\{v,q\})-\ell(v)\}\\
                              &=\min_{\{u',q''\}\in E}D[u',q'',C'\setminus\{\col(v)\}]
                                +\max\{0,\pw(\{v,q\})-\ell(v)\}.\qedhere
  \end{align*}
\end{proof}
\begin{lemma}
  \label[lemma]{lem:leafalone}
  For any \(\{v,q\}\in E\)
  and \(C'\subseteq C\) with \(|C'|\geq 2\)
  such that the connected component of~\(G_\ell\)
  containing~\(v\)
  does \emph{not} contain a vertex
  with color in~\(C'\setminus\{\col(v)\}\),
  \begin{equation}
    D[v,q,C']\leq D_3[v,q,C']:=\min
    \begin{Bmatrix*}[l]
      D[u,q',C'\setminus\{\col(v)\}]+\max\{0,\pw(\{v,q\})-\ell(v)\}\\
      \text{\quad for all }u\in N_G(v)\text{ and }q'\in N_G(u)\\
      \text{\qquad such that }\pw(\{u,q'\})\geq \pw(\{u,v\})\\
      \text{\qquad and }\pw(\{v,q\})\geq \pw(\{u,v\})
    \end{Bmatrix*}.
    \label{eq:dp3}
  \end{equation}
  If there is an optimal solution~\(T\)
  to~$P(v,q,C')$
  with a leaf~$v$,
  then~$D[v,q,C'] = D_3[v,q,C']$.
\end{lemma}

\begin{proof}{}
  $(\leq)$
  Let \(T'\)~be an optimal solution
  to \(P(u,q',C'\setminus\{\col(v)\})\)
  for any \(u\in N_G(v)\)
  and \(q'\in N_G(u)\)
  such that
  \(\pw(\{u,q'\})\geq \pw(\{u,v\})\)
  and
  \(\pw(\{v,q\})\geq \pw(\{u,v\})\).
  Adding the edge~\(\{u,v\}\) to~\(T'\)
  gives a feasible solution~\(T'\) for \(P(v,q,C')\).
  Thus
  \begin{align*}
    D[v,q,C']\leq \Phi(v,q,T')&=D[u,q',C'\setminus\{\col(v)\}]+\max\{0,\pw(\{v,q\})-\ell(v)\}.
  \end{align*}
  \((\geq)\)
  Let \(u\)~be the neighbor of~\(v\) in~\(T\).
  Since \(T\)~contains no
  other vertices than~\(v\)
  from the connected component of~\(G_\ell\),
  by \cref{def:padded}
  it follows that \(\{u,v\}\in E\),
  or, equivalently,
  \(u\in N_G(v)\).
  Now, let
  \(\{u,q'\}\) be an edge in~\(T\)
  maximizing~\(\pw(\{u,q'\})\).
  Then
  \(T-\{v\}\)
  is a feasible solution for \(P(u,q',C'\setminus\{\col(v)\})\) and
  \begin{align*}
    D[v,q,C']=\Phi(v,q,T)&=\Phi(u,q',T-\{v\})+\max\{0,\pw(\{v,q\})-\ell(v)\}\\
    &\geq
      D[u,q',C'\setminus\{\col(v)\}]+\max\{0,\pw(\{v,q\})-\ell(v)\}\\
    &=D[u,q^*,C'\setminus\{\col(v)\}]+\max\{0,\pw(\{v,q\})-\ell(v)\}
  \end{align*}
  for some \(\{u,q^*\}\in E\)
  with \(\pw(\{u,q^*\})=\pw(\{u,q'\})\)
  by \cref{lem:oldweight}.
  Note that \(q^*\in N_G(u)\)
  with
  \(\pw(\{u,q^*\})=\pw(\{u,q'\})\geq\pw(\{u,v\})\)
  since \(\{u,q'\}\) maximizes
  \(\pw(\{u,q'\})\)  among the edges incident to~\(u\)
  in~\(T\).
\end{proof}

\noindent
This completes
our recurrence relations
for computing~\(D[v,q,C']\),
which we will now use to prove \cref{lem:mcpu}.

\begin{proof}[Proof of \cref{lem:mcpu}.]
  We first compute all connected components
  of~\(G\) in \(O(n+m)\)~time
  using depth\hyp first search,
  marking each vertex
  with the number of the connected
  component it belongs to.
  This allows us to access~\(\pw(\{v,q\})\)
  for any edge~\(\{v,q\}\in E\)
  in constant time:
  if \(v\)~and \(q\) are in one component,
  then \(\pw(\{v,q\})=0\) by \cref{def:padded}.
  Otherwise, \(\pw(\{v,q\})=w(\{v,q\})\).
  
  We now sort the neighbors~\(q\)
  of each vertex~\(v\)
  by non\hyp increasing edge weight~\(\pw(\{v,q\})\)
  in \(O(n\log n)\)~total time.
  Then,
  in order to compute~\(D_3[v,q,C']\) in \eqref{eq:dp3}
  quickly later,
  for each edge~\(\{v,q\}\in E\)
  and \(u\in N_G[v]\) with
  \(\pw(\{v,q\})\geq \pw(\{u,v\})\),
  we precompute
  \[
    X[v,q,u]:=\max\{j\in\{1,\dots,\deg(u)\}\mid
    \pw(\{u,q_j\})\geq \pw(\{u,v\})\},
  \]
  where \(N_G(u)=\{q_1,\dots,q_{\deg(u)}\}\)
  such that \(\pw(\{u,q_i\})\geq \pw(\{u,q_{i+1}\})\)
  for all~\(i\in\{1,\dots,j-1\}\).
  The computation of \(X[v,q,u]\)
  for all \(\{v,q\}\in E\) and \(u\in N_G(v)\)
  works in \(O(mn\log n)\)~total time
  using binary search on~\(N_G(u)\).

  Then, we compute
  the table entries~\(D[v,q,C']\)
  for all~\(\{v,q\}\in E\) and~\(C'\subseteq C\).
  The optimum of the \MCCS{} instance
  is then given by \eqref{eq:whereopt}.
  We compute the table entries
  for increasing cardinality of~\(C'\).
  While computing
  \(D[v,q,C']\)
  for all~\(\{v,q\}\in E\)
  and fixed~\(C'\subseteq C\),
  we also precompute
  \[
    Y[C']:=\min\{D[v,q,C']\mid \{v,q\}\in E\}
  \]
  increasing the running time
  by only a constant factor.
  Moreover,
  while computing
  \(D[v,q,C']\)
  for fixed~\(C'\subseteq C\),
  fixed~\(v\in V\),
  and all~\(q\in N_G(v)\)
  by non\hyp increasing edge weight,
  we precompute
  \[
    Z[v,j,C']:=\min\{D[v,q_i,C'\setminus\{\col(v)\}]\mid
    i\in\{1,\dots,j\}\},
  \]
  where \(N_G(v)=\{q_1,\dots,q_{\deg(v)}\}\)
  such that \(\pw(\{v,q_i\})\geq \pw(\{v,q_{i+1}\})\)
  for all~\(i\in\{1,\dots,j-1\}\).
  This increases the running time
  only by a constant factor since
  \(Z[v,j,C']\)~is
  computable in constant time
  from~\(Z[v,j-1,C']\)
  and~\(D[v,q_j,C'\setminus\{\col(v)\}]\).
  
  We now describe how to
  compute the table entries~\(D[v,q,C']\).
  By \cref{obs:onecolor},
  the \(O(m)\)~table entries~\(D[v,q,C']\)
  for \(\{v,q\}\in E\) and~\(|C'|=1\)
  can be computed in constant time each.

  For \(|C'| \geq 2\),
  we compute \(D[v,q,C']\)
  under the assumption
  that all table entries for~\(C''\subsetneq C'\)
  are already known.
  Vertex~\(v\) is either a cut vertex
  or a leaf of an optimal solution~\(T\)
  to \(P(v,q,C')\),
  which, without loss of generality,
  is a tree.
  Thus,
  if there is a vertex~\(u\)
  in the same component of~\(G_\ell\)
  as \(v\) and \(\col(u)\in C'\),
  then we compute
  \(D[v,q,C']=\min\{D_1[v,q,C'],D_2[v,q,C']\}\)
  using \eqref{eq:dp1} and \eqref{eq:dp2}.
  Otherwise, we compute
  \(D[v,q,C']=\min\{D_1[v,q,C'],D_3[v,q,C']\}\)
  using \eqref{eq:dp1} and \eqref{eq:dp3}.

  One can compute \(D_1[v,q,C']\) from \eqref{eq:dp1}
  in \(O(3^{|C|}m)\)~total time
  for all~\(\{v,q\}\in E\)
  and~\(C'\subseteq C\)
  since,
  in total,
  there are at most
  \(3^{|C|}\)~ways to choose~\(C_1\cup C_2=C'\) for all~\(C'\subseteq C\)
  such that
  \(C_1\cap C_2=\{\col(v)\}\) for some~\(v\in V\):
  each element except \(\col(v)\)
  is either in~\(C_1\), in~\(C_2\), or in \(C\setminus(C_1\cup C_2)\).

  One can compute \(D_2[v,q,C']\) from \eqref{eq:dp2}
  in constant time for each~\(\{v,q\}\in E\)
  and~\(C'\subseteq C\)
  using~\(Y[C'\setminus\{v\}]\).
  Thus,
  the total time spent
  computing the \(D_2[v,q,C']\)
  is \(O(2^{|C|}\cdot m)\).

  Finally,
  \(D_3[v,q,C']\) from \eqref{eq:dp3}
  can be computed
  in \(O(n)\)~time
  for each~\(\{v,q\}\in E\) and~\(C'\subseteq C\):
  we iterate over each \(u\in N_G(v)\)
  with \(\pw(\{v,q\})\geq \pw(\{u,v\})\)
  and, for each of them,
  look up~\(Z[u,j,C'\setminus\{v\}]\)
  for \(j=X[v,q,u]\).
  It follows
  that the total time
  to compute \(D_3[v,q,C']\) from~\eqref{eq:dp3}
  is \(O(2^{|C|}mn)\).
\end{proof}
\begin{remark}
\cref{lem:mcpu} directly yields \cref{prop:exalg}:
for solving an \miPoSyCo{} instance~\((G,w)\) with~\(G=(V,E)\),
we can simply choose \(\ell\colon V\to\N,v\mapsto 0\),
the color set~\(C=\{1,\dots,n\}\),
an arbitrary bijection \(\col\colon V\to C\).
Then,
\(\pG=G\) and \(\pw=w\)
and we solve the \MCCS{} instance~\((\pG,\pw,\col,\ell,C)\)
in \(O(3^n\cdot m)\)~time
by \cref{lem:mcpu},
which is equivalent
to~\((G,w)\) by \cref{lem:mcpu-mpsc,lem:mpsc-mcpu}.
\end{remark}

\subsection{Running time and derandomization of \cref{alg:ccs}}
\label{sec:rt}
We can finally prove
the running time, error probability,
and show the derandomization of \cref{alg:ccs},
thus proving \cref{thm:ccs}.

\begin{proof}[Proof of \cref{thm:ccs}.]
  To analyze the running time of \cref{alg:ccs},
  note that
  there are \({2c-2\choose c}\)~possibilities
  to enumerate all~\(\nc_1,\dots,\nc_c\in\N^+\) with \(\sum_{i=1}^c \nc_i\leq 2c-2\)
  in line~\ref{lin:comploop}.
  Using Stirling's approximation
  \[
    \sqrt{2\pi n}\cdot\left(\frac{n}{e}\right)^n\leq n!\leq e^{1/12n}\sqrt{2\pi n}\cdot\left(\frac{n}{e}\right)^n,
  \]
  we get that the number of iterations of the loop in line~\ref{lin:comploop} is
  
  \begin{align*}
    {2c-2\choose c}&=\frac{(2c-2)!}{c!\cdot (c-2)!}\in O\left(\frac{\sqrt{2c-2}}{\sqrt{c}\cdot\sqrt{c-2}}\cdot\frac{(2c-2)^{2c-2}}{e^{2c-2}}\cdot \frac{e^c\cdot e^{c-2}}{c^{c}\cdot (c-2)^{c-2}}\right)
                     =O\left(\frac{(2c-2)^{2c-2}}{\sqrt{c}\cdot c^c\cdot (c-2)^{c-2}}\right)\\
    &=O\left(\frac{4^{c}}{\sqrt{c}}\cdot \Bigl(\frac{c-1}{c}\Bigr)^c\cdot\Bigl(\frac{c-1}{c-2}\Bigr)^{c-2}\right)=O\left(\frac{4^{c}}{\sqrt{c}}\cdot\Bigl(1-\frac{1}{c}\Bigr)^{c}\cdot\Bigl(1+\frac{1}{c-2}\Bigr)^{c-2}\right)
      =O\left(\frac{4^c}{\sqrt{c}}\right).
  \end{align*} 
  Solving
  the \MCCS{} instance with at most \(2c-2\)~colors
  in line~\ref{lin:onemcpusol}
  works in \(O(3^{2c-2}\cdot m+2^{2c-2}nm+nm\log n)\)~time
  by \cref{lem:mcpu}.
  To analyze the number~\(t\) of repetitions in line~\ref{lin:repeat},
  we use \(x\geq\ln(1+x)\) and
  again Stirling's approximation to get
  \begin{align*}
    t-1&\leq
    \ln1/\varepsilon\cdot\prod_{i=1}^c\frac{\nc_i^{\nc_i}}{\nc_i!}\leq
       \ln1/\varepsilon\cdot\prod_{i=1}^c\frac{e^{\nc_i}}{\sqrt{2\pi \nc_i}}
    \leq
      \ln1/\varepsilon\cdot \frac{e^{2c-2}}{\sqrt{2\pi}^c}.
  \end{align*}
  The running time of the algorithm is thus
  \(\ln 1/\varepsilon \cdot (4e^2/\sqrt{2\pi})^c \cdot 1/\sqrt{c}
  \cdot O(9^cm+4^cnm+nm\log n)\).

  (ii) To derandomize the algorithm,
  we use \emph{\((d,k)\)-perfect hash families}~\(\mathcal F\)
  of functions~\(f\colon\{1,\dots,d\}\to\{1,\dots,k\}\)
  such that, for each subset~\(W\subseteq\{1,\dots,d\}\)
  of size~\(k\),
  at least one of the functions in~\(\mathcal F\)
  is a bijection between~\(W\) and~\(\{1,\dots,k\}\).
  Let \(n_i\)~be the number of vertices in component~\(G_\ell^i\).
  Instead of coloring the vertices in each component~\(G_\ell^i\)
  for \(i\in \{1,\dots,c\}\) with colors from~\(C_i\)
  randomly in line~\ref{lin:randcol},
  we color them
  using all of the functions in~\(\mathcal F_i\)
  of an  \((n_i,\nc_i)\)-perfect hash family~\(\mathcal F_i\).

  One can construct an \((n_i,\nc_i)\)-perfect hash family~\(\mathcal F_i\)
  with \(e^{\nc_i}{\nc_i}^{O(\log \nc_i)}\log n_i\)~functions
  in
  \(e^{\nc_i}{\nc_i}^{O(\log \nc_i)}\cdot n_i\log n_i\)~time \citep[Theorem~5.18]{CFK+15}.
  Thus,
  in each iteration of the loop in line~\ref{lin:comploop},
  we generate the families~\(\mathcal F_i\) for all \(i\in\{1,\dots,c\}\)
  in
  \[
    \sum_{i=1}^ce^{\nc_i}{\nc_i}^{O(\log \nc_i)}n_i\log n_i\subseteq e^cc^{O(\log c)}\log n \sum_{i=1}^cn_i=e^cc^{O(\log c)}n\log n %
  \]
  time. Then, in line~\ref{lin:randcol}, we color
  the vertices of all components of~\(G_\ell\)
  according to
  \[
    \prod_{i=1}^ce^{\nc_i}{\nc_i}^{O(\log \nc_i)}\log n_i\subseteq (\log n)^ce^{2c}c^{O(c\log c)} %
  \]
  functions. Thus, the overall running time of the deterministic algorithm is
  \(c^{O(c\log c)}\cdot (4e)^{2c}\cdot(\log n)^c\cdot O(9^cm+4^cnm+nm\log n)\),
  which is \(c^{O(c\log c)}\cdot n^{O(1)}\) \citep[Exercise~3.18]{CFK+15}.
\end{proof}

\subsection{Hardness of provably effective data reduction}
\label{sec:dr}

In the previous sections,
we have seen that
the number~$c$ of connected components
of the obligatory subgraph
can be effectively used to obtain
fixed\hyp parameter algorithms.
We will experimentally support these findings
in \cref{sec:ccs-exp},
where we will also find data reduction
to play an important role.

However,
the following theorem shows that
\mpsc{} has no problem kernel of size polynomial in~$c$
unless the polynomial\hyp time hierarchy collapses
and, as is also conjectured, does not even have Turing kernels of size polynomial in~$c$ \citep{HKS+15b}.

\begin{theorem}\label[theorem]{thm:wkhard}
  \miPoSyCo{} is WK[1]-hard parameterized by the number~$c$ of
  connected components of the obligatory subgraph~$G_\ell$
  even for the trivial \plb{}s~$\ell(v)$
  equal to the least weight of any edge incident to~$v$.
  This holds even in graphs of edge weights one and two
  or in complete graphs with metric edge weights.
\end{theorem}

\begin{proof}
\citet{HKS+15b} have shown that
the problem of
checking whether a \MSC{} instance~$(U,\mathcal F)$
has
a solution of cost at most~$k$
is WK[1]-hard parameterized by~$|U|$.

As shown in \cref{lem:param-reduction},
\cref{trans:phi,trans:metric}
correctly reduce this problem to \kmiPoSyCo{}.
It is now enough to observe
that the obligatory subgraphs
of the graphs~$G=(\{s\}\uplus U\uplus\mathcal F,E)$ generated by
\cref{trans:phi,trans:metric}
have $|U|+1$ connected components:
one component consists of the vertices~$v\in \{s\}\uplus\mathcal F$, each of which has~$\ell(v)=1$,
and $|U|$ components are formed by the vertices in~$u\in U$,
each of which has~$\ell(u)=2$.
Thus,
\cref{trans:phi,trans:metric}
are polynomial parameter transformations
of \MSC{} parameterized by~$|U|$
to \kmiPoSyCo{} parameterized by~$c$.
\end{proof}

\noindent
In contrast to \cref{thm:wkhard},
we point out that,
using an approach of \citet{BFTxx},
given any~$\varepsilon>0$,
one can reduce any instance~$I$ of \mpsc{} with metric
edge weights
to an instance~$I'$ of an annotated version of the problem
with $O(c/\varepsilon)$~vertices
such that any $\alpha$-approximation for~$I'$
can be transformed to an~$(1+\varepsilon)\alpha$-approximation for~$I$  \citep{Smi20}: the annotations merely
keep track of \plb{}s and
which vertices in the remaining instance
were connected in the input instance.

\section{Experimental evaluation}
\label{sec:ccs-exp}
In this section,
we experimentally evaluate
\cref{alg:ccs}
and compare it to state
of the art ILP models due to \citet{MG05}
solved by CPLEX.
In \cref{sec:heur},
we describe two data reduction rules
to speed up the algorithms.
In \cref{sec:data},
we describe the data we tested the algorithms on.
In \cref{sec:setup},
we describe our %
test environment
and implementation.
Finally, in \cref{sec:results},
we present our experimental results.

\subsection{Data reduction}
\label{sec:heur}

The following data reduction rules
preserve the possibility to find optimal solutions.
However,
by \cref{thm:wkhard},
they cannot provably
lead to a problem kernel for \mpsc{}
with size polynomial in the number~$c$
of the connected components of the obligatory subgraph.

\paragraph{Heavy edge deletion.}
The first preprocessing step is inspired by \citet{MG05}.
The weight~\(M\) of a minimum spanning tree
is at most twice the cost of an optimal
solution to an \mpsc{} instance $(G,w)$ on a graph~$G=(V,E)$ \citep{KKKP00},
so no edge~\(e\in E\) with weight~\(w(e)>2M\)
can be part of an optimal solution.
\citet{MG05} thus suggest to delete all edges~\(e\)
with \(w(e)>2M\) from~$E$.
We take this thought further,
incorporating \plb{}s~$\ell\colon E\to\N$
and taking a possibly tighter upper bound than~$2M$.

Let $M'\leq 2M$~be the cost of a minimum spanning tree
when viewed as a solution to \mpsc{}.
By \cref{def:plb},
there is an optimal solution~$T$
in which each vertex~$v$ pays at least~$\ell(v)$.
Since the cost of~$T$ is at most~$M'$,
it cannot contain any edge~$\{u,v\}\in E$ satisfying
\[
\smashoperator{\sum_{x\in V\setminus\{u,v\}}}\ell(x)+ \max\{\ell(u), w(\{u,v\})\} + \max\{\ell(v), w(\{u,v\})\} > M'.
\]
We thus delete all such edges.

\paragraph{Redundant vertex deletion.}
While the previous data reduction rule
is applicable to any solution algorithm for \mpsc{},
the next data reduction exploits
properties of \cref{alg:ccs}.
In line~\ref{lin:mcpusol},
the connected components
of the obligatory subgraphs~$G_\ell$
are treated as cliques:
the \MCCS{} subproblem is solved on~$\pG$.
Thus,
there is always an optimal solution to the \MCCS{} subproblem
that does not contain vertices of~$\pG$
having neighbors only in their own connected component of~$G_\ell$.
We thus delete such vertices.

\subsection{Data generation}
\label{sec:data}

Due to the lack of openly available
benchmark instances for \mpsc{},
experimental works on \mpsc{}
evaluate their algorithms
on random points on a plane \citep{MG05,ACM+06,EMP17,PEM19}.
As sketched in \cref{ex:broken-grids},
in this case
the number of connected components
in the obligatory subgraph is likely
to be~\(\Theta(n)\).
Such test instances seem artificial in several application scenarios.
Moreover,
they lack any input structure,
whereas the aim of our algorithm is efficiently solving \mpsc{}
by making explicit use of input structure.
The latter presumably occurs in real-world monitoring systems
as,
in order to guarantee a long lifetime of the sensor network,
the sensor layout takes energy saving aspects into account.
This aspect was taken into consideration for the two instance types
that we describe 
in \cref{sec:faulty,sec:lakes}.

In all generated instances,
we choose as \plb{}~$\ell(v)$ %
the least weight of any edge incident to the vertex~$v$.
For vertices
incident to a single edge~$\{u,v\}$,
we set the \plb{}s $\ell(u)$ and $\ell(v)$
to cover at least the weight of~$\{u,v\}$.

\begin{figure}
    \centering
    \small
    \begin{subfigure}{0.45\textwidth}
      \raggedleft
    \begin{tikzpicture}[scale=1]
      \begin{axis}[ticks=none,
        width=9cm, height=9cm,
                     xmin=-1, xmax=10.5, ymin=-1, ymax=11,
                     mark size=2, black,
                     scatter/classes={
                         a={mark=*, fill=white},
                         b={mark=square*},
                         c={mark=*}
                     }]
            \addplot[scatter, only marks, scatter src=explicit symbolic]
                table [x expr=\thisrow{y}, y expr=10-\thisrow{x}, meta=c] {simple_example.txt};
        \end{axis}
    \end{tikzpicture}
    \label{fig:faultya}
    \end{subfigure}
    \hfill
    \begin{subfigure}{0.45\textwidth}
      \raggedleft
    \begin{tikzpicture}[scale=1]
      \begin{axis}[ticks=none,
        width=9cm, height=9cm,
                     xmin=-1, xmax=10.5, ymin=-1, ymax=11,
                     mark size=2, black,
                     scatter/classes={
                         a={mark=*, fill=white},
                         b={mark=square*},
                         c={mark=*}
                     }]
            \input{simple_example_graph.tex}
            \addplot[scatter, only marks, scatter src=explicit symbolic]
                table [x expr=\thisrow{y}, y expr=10-\thisrow{x}, meta=c] {simple_example.txt};
        \end{axis}
    \end{tikzpicture}
    \label{fig:faultyb}
    \end{subfigure}

    \caption{An instance from the ``faulty grid'' data set
    for $N=10$ and $c=3$ on the left and its optimal solution on the right.
      Sensors from distinct obligatory components are drawn using distinct marks.}
    \label{fig:faulty-grid}
\end{figure}

\subsubsection{The ``faulty grid'' data set.}
\label{sec:faulty}
This instance set
simulates the sensor fault scenario described in
\cref{ex:broken-grids}.
Grid-like sensor arrangements
minimize sensing area overlap when monitoring areas,
which minimizes the energy for sensing
and thus is important for a long lifetime of the
(usually battery-powered) sensor network~\citep{ZH05,ZEAC09}.
Thus,
in this data set,
sensors are laid out on a triangular grid,
yet we assume that several sensors fail.
The goal
is to restore the connectivity of the network
at minimum additional cost,
again in order to minimize energy consumption.
An example is shown in \cref{fig:faulty-grid}.

More specifically,
the generation of the ``faulty grid'' instance is
governed by two parameters:
the grid size~$N$ and the number of obligatory components~$c$.
An instance is generated by
assuming an infinite grid of
equilateral triangles with unit edge lengths in the plane
and taking the nodes of the grid that fall
into a $[0, N]\times[0, N]$ rectangle.
These nodes are the wireless sensors
that form the vertices of a complete graph~\(G\).
An edge $\{v, u\}$ in~\(G\)
has weight equal to the squared Euclidean
distance between~\(v\) and~\(u\),
since by the inverse-square law,
signal intensity depends inversely proportionally
on the squared distance.
Now, we select uniformly at random a fraction of vertices from~\(G\) and delete them to simulate defective sensors.
We chose the fraction to be $0.1 + \frac{1}{\sqrt{N}}$,
since it yields a small yet greater than one number of obligatory components.
After deleting vertices,
we select
graphs whose obligatory subgraph
has~$c$~connected components.

We generated instances with
$N \in \{10, 20, \dots, 80\}$,
and
$c \in \{3, 4, 5\}$.
As seen in \cref{fig:faulty-grid},
these instances
usually have one giant connected component
and several small connected components.
\setlength{\fboxsep}{0pt}
\begin{figure}
    \begin{subfigure}[t]{0.45\textwidth}
      \fbox{\includegraphics[width=\textwidth]{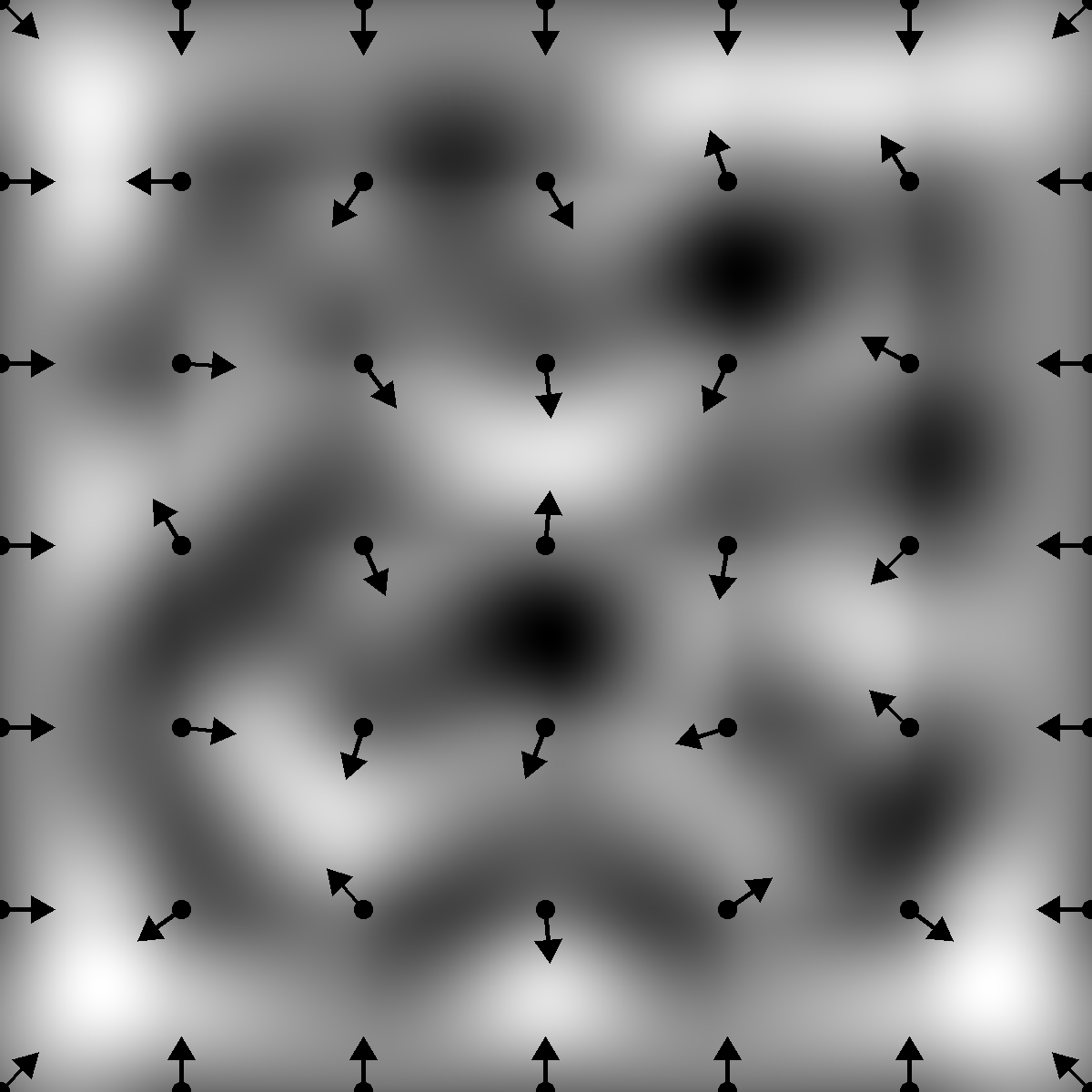}}
        \caption{
            Height map over a $[0,M-1]^2$ square
            generated from Perlin noise using randomly chosen gradient vectors
            in the square's integer points.
            A lighter shade of gray means higher.
        }
        \label{fig:lakesa}
    \end{subfigure}
    \hfill
    \begin{subfigure}[t]{0.45\textwidth}
        \fbox{\includegraphics[width=\textwidth]{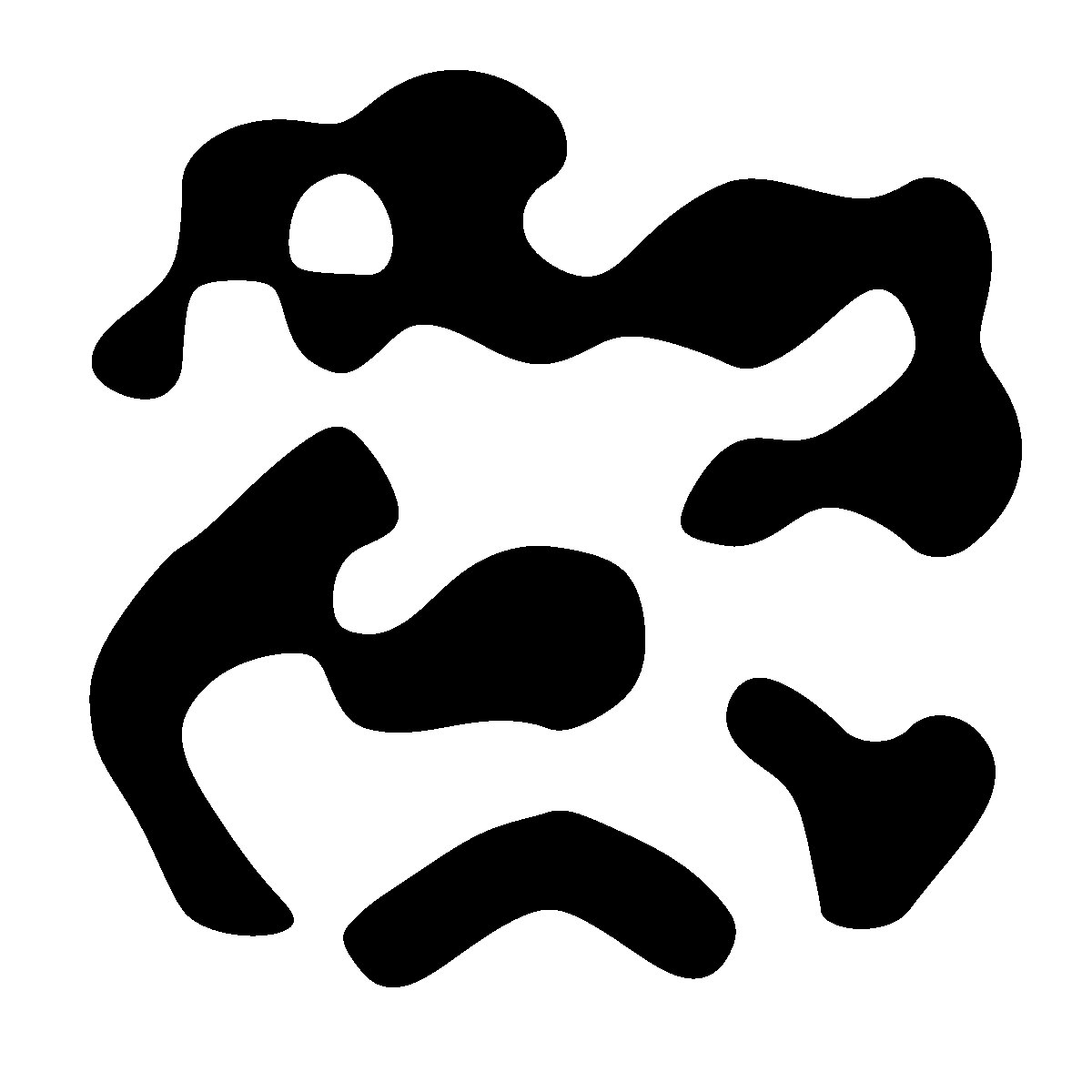}}
        \caption{
            Black areas are below the ``water level'' and form lakes.
        }
        \label{fig:lakesb}
    \end{subfigure}

    \medskip
    \begin{subfigure}[t]{0.45\textwidth}
    \begin{tikzpicture}[scale=1]
      \begin{axis}[ticks=none,
                    width=9cm, height=9cm,
                     xmin=0, xmax=16, ymin=0, ymax=16,
                     mark size=2, black,
                     scatter/classes={
                         a={mark=square*},
                         b={mark=*, fill=white},
                         c={mark=*},
                         d={mark=triangle*}
                     }]
            \addplot[scatter, only marks, scatter src=explicit symbolic]
                table [x expr=\thisrow{y}, y expr=16-\thisrow{x}, meta=c] {perlin_example.txt};
        \end{axis}
    \end{tikzpicture}
        \caption{A triangular sensor grid covers the four lakes. Distinct obligatory components are represented by distinct marks.}
        \label{fig:lakesc}
    \end{subfigure}
    \hfill
    \begin{subfigure}[t]{0.45\textwidth}
    \begin{tikzpicture}[scale=1]
      \begin{axis}[ticks=none,
                    width=9cm, height=9cm,
                     xmin=0, xmax=16, ymin=0, ymax=16,
                     mark size=2, black,
                     scatter/classes={
                         a={mark=square*},
                         b={mark=*, fill=white},
                         c={mark=*},
                         d={mark=triangle*}
                     }]
            \input{perlin_example_graph.tex}
            \addplot[scatter, only marks, scatter src=explicit symbolic]
                table [x expr=\thisrow{y}, y expr=16-\thisrow{x}, meta=c] {perlin_example.txt};
        \end{axis}
    \end{tikzpicture}
    \caption{An optimal solution.}
        \label{fig:lakesd}
    \end{subfigure}
  \caption{Three steps of generating a ``lakes'' instance for $M=7$,
    $N=16$, $c=4$, and an optimal solution.}
  \label{fig:lakes}
\end{figure}

\subsubsection{The ``lakes'' data set.}
\label{sec:lakes}
This data set corresponds to the scenario
where we have to connect components of a wireless network
that are already connected by vertex lower bounds
(or due to constraints arising in the application).
Intuitively,
one can imagine the task of measuring pollution in lakes
using triangular sensor grids.
An instance of this type is shown in
\cref{fig:lakesc}.
The lakes are generated as follows:
we generate a triangular grid and terrain data,
assign a global water level to the terrain,
and take the sensors lying in the lakes.

In more detail
(see \cref{fig:lakes})
instance generation is governed by three parameters:
the granularity~$M$ of the ``terrain'',
the triangular grid size~$N$,
and the number of obligatory components~$c$.
We generate a
Perlin noise function~$F\colon [0,M-1]^2 \mapsto \mathbb{R}$~\citep{PH89}
over a square with edge length~$M-1$.
Intuitively,
$F(x,y)$
gives the height of the terrain over point~$(x,y)\in[0,M-1]^2$
of the square (\cref{fig:lakesa}).
When generating the Perlin noise function~$F$,
we choose the angle of the gradient~$\nabla F(x,y)$
randomly for the \emph{inner integer} points~$(x,y)$ of the square.
In order to prevent lakes
from crossing the boundaries of the square,
we choose~$\nabla F(x,y)$
to point %
towards the square center for points~$(x,y)$ on the square's corners
and perpendicular to the
edge for the other integer points~$(x,y)$ on the square's edges,
as shown in \cref{fig:lakesa}.

\looseness=-1
We then cover the terrain using a triangular sensor grid
as in the ``faulty grid'' data set,
yet take only those grid nodes~$(x,y)\in [0,N]^2$
lying in lakes, that is,
$F(x\frac{M-1}{N}, y\frac{M-1}{N}) < 0$ (\cref{fig:lakesb} and \cref{fig:lakesc}).
Like in the ``faulty grid''
data set,
all generated network nodes form a vertex set~$V$
of a complete graph~$G$,
where the weight of each edge is the squared
Euclidean distance between the end points.
The described procedure leaves an instance
with a small number of obligatory components~$c$.
We repeat the generation process until the desired value of~$c$ is obtained.

We generated instances for
$M=7$,
$N\in\{10, 12, \dots, 30\}$, and
$c\in\{3, 4, 5\}$.

\subsection{Experimental setup.}
\label{sec:setup}
We implemented \cref{alg:ccs} in
approximately 650 lines of C++,
not including the code for the testing environment.%
\footnote{The code and testing environment are available at \url{https://gitlab.com/rvb/mpsc}.}
We compiled the code
using GNU C++ 9.2.1
with optimization level \texttt{-O2}.
The experiments were conducted
on a dual core Intel Core i7-9700K CPU
with 3.60\,GHz
and 15.6\,GiB of RAM running 64-bit Ubuntu~19.10.
We compared the following five algorithms.

\begin{description}
\item[DP1:] \cref{alg:ccs}
\emph{with} the the two data reduction rules described in \cref{sec:heur}.
\item[DP2:] \cref{alg:ccs}
\emph{without} the the two data reduction rules described in \cref{sec:heur}.
\item[BF:] The simple $O(n^{2c})$-time brute force algorithm described in the beginning of \cref{sec:outl}.
\item[EX1, EX2:]  Two state-of-the-art exact
  algorithms based on ILP models
  suggested by \citet{MG05},
  which we solve using CPLEX 12.8.\footnote{\url{http://www.cplex.com}}
\end{description}

\noindent
For DP1 and DP2,
we chose
$\varepsilon=0.1$ as an upper bound on the error probability.
We implemented EX1 and EX2 of \citet{MG05}
with the edge reduction rule from \cref{sec:heur}
and their extra inequalities (18), (19), (20), (23), (24), (25),
as they were the most effective
in their experiments.
We point out that our implementations of algorithms EX1 and EX2
also make use of vertex lower bounds:
we add ILP constraints
that force edges of the obligatory subgraph
into the solution.

In particular,
this means that
our implementation of algorithm EX2,
just like our algorithms DP1 and DP2,
merely has to solve the problem of connecting $c$~components:
EX2 originally consists of iteratively solving an ILP
model, adding more and more connectivity constraints in each iteration.
The solution to the ILP model in each iteration
yields a spanning subgraph that is not necessarily connected.
Its connected components are computed
and, for the next iteration,
a constraint is added so that at least one edge
has to leave each connected component.
Since our implementation in the first iteration adds 
the constraints that force
all obligatory edges into the solution,
our implementation of
EX2 has all obligatory connected components available
after the first iteration.

We ran the algorithms on instances
described in \cref{sec:data}.
We generated 10 instances with different random seeds
for each set of generation parameters.

\subsection{Experimental results.}
\label{sec:results}

\begin{figure}
  \centering
  \small
  \ref{algos}
  
  \begin{subfigure}{0.32\textwidth}
  \begin{tikzpicture}[scale=1]
    \begin{axis}[xlabel={Number of vertices},
                ylabel={Average running time [s]}, %
                width=\textwidth,
                height=\textwidth,
                ymin=-1, ymax=12.5,
                legend columns=-1,
                legend to name=algos]
      \addplot[mark=x] table [x=dp1_vertices_before_mean, y expr=\thisrow{dp1_transform_time_mean}+\thisrow{dp1_solve_time_mean}] {simple_C_3_dp1.txt}; \addlegendentry[scale=0.8]{DP1}
      \addplot[mark=+] table [x=dp2_vertices_before_mean, y expr=\thisrow{dp2_transform_time_mean}+\thisrow{dp2_solve_time_mean}] {simple_C_3_dp2.txt}; \addlegendentry[scale=0.8]{DP2}
      \addplot[mark=square] table [x=ex2lb_vertices_before_mean, y expr=\thisrow{ex2lb_transform_time_mean}+\thisrow{ex2lb_solve_time_mean}] {simple_C_3_ex2lb.txt}; \addlegendentry[scale=0.8]{EX2}
      \addplot[mark=o] table [x=ex1lb_vertices_before_mean, y expr=\thisrow{ex1lb_transform_time_mean}+\thisrow{ex1lb_solve_time_mean}] {simple_C_3_ex1lb.txt}; \addlegendentry[scale=0.8]{EX1}
      \addplot[mark=star] table [x=bf_vertices_before_mean, y expr=\thisrow{bf_transform_time_mean}+\thisrow{bf_solve_time_mean}] {simple_C_3_bf.txt}; \addlegendentry[scale=0.8]{BF}
    \end{axis}
  \end{tikzpicture}
  \caption{$c=3$}
  \end{subfigure}
  \hfill
  \begin{subfigure}{0.32\textwidth}
  \begin{tikzpicture}[scale=1]
    \begin{axis}[xlabel={Number of vertices}, ylabel={Average running time [s]},
                 width=\textwidth,
                 height=\textwidth,
                 ymin=-1, ymax=12.5]
      \addplot[mark=x] table [x=dp1_vertices_before_mean, y expr=\thisrow{dp1_transform_time_mean}+\thisrow{dp1_solve_time_mean}] {simple_C_4_dp1.txt}; %
      \addplot[mark=+] table [x=dp2_vertices_before_mean, y expr=\thisrow{dp2_transform_time_mean}+\thisrow{dp2_solve_time_mean}] {simple_C_4_dp2.txt};%
      \addplot[mark=square] table [x=ex2lb_vertices_before_mean, y expr=\thisrow{ex2lb_transform_time_mean}+\thisrow{ex2lb_solve_time_mean}] {simple_C_4_ex2lb.txt}; %
      \addplot[mark=o] table [x=ex1lb_vertices_before_mean, y expr=\thisrow{ex1lb_transform_time_mean}+\thisrow{ex1lb_solve_time_mean}] {simple_C_4_ex1lb.txt}; %
    \end{axis}
  \end{tikzpicture}
  \caption{$c=4$}
  \end{subfigure}
\hfill
  \begin{subfigure}{0.32\textwidth}
  \begin{tikzpicture}[scale=1]
    \begin{axis}[xlabel={Number of vertices}, ylabel={Average running time [s]},
                 width=\textwidth,
                 height=\textwidth,
                 ymin=-1, ymax=12.5]
      \addplot[mark=x] table [x=dp1_vertices_before_mean, y expr=\thisrow{dp1_transform_time_mean}+\thisrow{dp1_solve_time_mean}] {simple_C_5_dp1.txt}; %
      \addplot[mark=square] table [x=ex2lb_vertices_before_mean, y expr=\thisrow{ex2lb_transform_time_mean}+\thisrow{ex2lb_solve_time_mean}] {simple_C_5_ex2lb.txt}; %
      \addplot[mark=o] table [x=ex1lb_vertices_before_mean, y expr=\thisrow{ex1lb_transform_time_mean}+\thisrow{ex1lb_solve_time_mean}] {simple_C_5_ex1lb.txt}; %
    \end{axis}
  \end{tikzpicture}
  \caption{$c=5$}
  \end{subfigure}
  \hfill
  \caption{Experimental evaluation of the running times of the five
    implementations BF, DP1, DP2, EX1, and EX2
    on ``faulty grid'' instances with \(C\in\{3,4,5\}\) connected components
    in the obligatory subgraph.
    }
  \label{fig:results}
\end{figure}

\subsubsection{``Faulty grid'' instances.}
\label{sec:faultyres}
\cref{fig:results} shows our experimental results
on the ``faulty grid'' instances.
The brute force algorithm~BF
is among the slowest already for \(c=3\)
and is excluded from the plots for $c\geq 4$
since its running time on only $332$~vertices
varied from $45$~seconds to two hours.
DP2, although being one of the best algorithms for \(c=3\),
is a bad choice already for \(c\geq4\),
and is excluded from the plot for~$c=5$,
since its running time for $2221$ vertices
varied between $212$ seconds and $85$ minutes.
DP1 outperforms
the integer linear programming models EX1 and EX2
of \citet{MG05}
for all~\(c\in\{3,4,5\}\).
This supports the claim that our
fixed\hyp parameter algorithm efficiently
solves instances for small values of~\(c\).

\begin{figure}
  \centering
  \small
	\def\maxValueRun{3000}
	\def\minValueRun{0.00003}
	\def\smallFactor{6}
    \def\bigFactor{36}

    \begin{subfigure}{0.48\textwidth}
	\begin{tikzpicture}[scale=1]
          \begin{loglogaxis}[
            width=\textwidth,
            height=\textwidth,
            xlabel={DP1 running time [s]},
            ylabel={EX2 running time [s]},
            legend cell align=left,
            legend pos=north west,
            xmax=30,%
            xmin=\minValueRun,
            ymax=30,%
            ymin=\minValueRun,
            ymode=log
            ]
            \addplot[only marks,mark=+, mark options={black!50}]
                table[col sep=tab,
                      x expr=\thisrow{dp1_solve_time}+\thisrow{dp1_transform_time},
                      y expr=\thisrow{ex2lb_solve_time}+\thisrow{ex2lb_transform_time}]
                      {perlin_C_3_dp1_ex2lb.txt};
            \addplot[color=black,domain=\minValueRun:\maxValueRun,samples=4] {x};
            \addplot[dashed,color=black!75,domain=\minValueRun:\maxValueRun,samples=4] {\smallFactor*x};
            \addplot[dashed,color=black!75,domain=\minValueRun:\maxValueRun,samples=4] {x/\smallFactor};
            \addplot[dotted,color=black,domain=\minValueRun:\maxValueRun,samples=4] {\bigFactor*x};
            \addplot[dotted,color=black,domain=\minValueRun:\maxValueRun,samples=4] {x/\bigFactor};
          \end{loglogaxis}
	\end{tikzpicture}
    \caption{$c = 3$}
    \end{subfigure}\hfill
    \begin{subfigure}{0.48\textwidth}
	\begin{tikzpicture}[scale=1]
          \begin{loglogaxis}[
            width=\textwidth,
            height=\textwidth,
            xlabel={DP1 running time [s]},
            ylabel={EX2 running time [s]},
            legend cell align=left,
            legend pos=north west,
            xmax=30,%
            xmin=0.0003,%
            ymax=30,%
            ymin=0.0003,%
            ymode=log
            ]
            \addplot[only marks,mark=x,mark options={black!75}]
                table[col sep=tab,
                      x expr=\thisrow{dp1_solve_time}+\thisrow{dp1_transform_time},
                      y expr=\thisrow{ex2lb_solve_time}+\thisrow{ex2lb_transform_time}]
                      {perlin_C_4_dp1_ex2lb.txt};
            \addplot[color=black,domain=\minValueRun:\maxValueRun,samples=4] {x};
            \addplot[dashed,color=black!75,domain=\minValueRun:\maxValueRun,samples=4] {\smallFactor*x};
            \addplot[dashed,color=black!75,domain=\minValueRun:\maxValueRun,samples=4] {x/\smallFactor};
            \addplot[dotted,color=black,domain=\minValueRun:\maxValueRun,samples=4] {\bigFactor*x};
            \addplot[dotted,color=black,domain=\minValueRun:\maxValueRun,samples=4] {x/\bigFactor};
          \end{loglogaxis}
	\end{tikzpicture}
    \caption{$c = 4$}
    \end{subfigure}\hfill

    \bigskip
    \noindent
    \begin{subfigure}{0.48\textwidth}
	\begin{tikzpicture}[scale=1]
          \begin{loglogaxis}[
            width=\textwidth,
            height=\textwidth,
            xlabel={DP1 running time [s]},
            ylabel={EX2 running time [s]},
            legend cell align=left,
            legend pos=north west,
            xmax=\maxValueRun,
            xmin=0.003,%
            ymax=\maxValueRun,
            ymin=0.003,%
            ymode=log
            ]
            \addplot[only marks,mark=star, mark options={black}]
                table[col sep=tab,
                      x expr=\thisrow{dp1_solve_time}+\thisrow{dp1_transform_time},
                      y expr=\thisrow{ex2lb_solve_time}+\thisrow{ex2lb_transform_time}]
                      {perlin_C_5_dp1_ex2lb.txt};
            \addplot[color=black,domain=\minValueRun:\maxValueRun,samples=4] {x};
            \addplot[dashed,color=black!75,domain=\minValueRun:\maxValueRun,samples=4] {\smallFactor*x};
            \addplot[dashed,color=black!75,domain=\minValueRun:\maxValueRun,samples=4] {x/\smallFactor};
            \addplot[dotted,color=black,domain=\minValueRun:\maxValueRun,samples=4] {\bigFactor*x};
            \addplot[dotted,color=black,domain=\minValueRun:\maxValueRun,samples=4] {x/\bigFactor};
          \end{loglogaxis}
	\end{tikzpicture}
    \caption{$c = 5$}
    \end{subfigure}\hfill
    \begin{subfigure}{0.48\textwidth}
	\begin{tikzpicture}[scale=1]
          \begin{loglogaxis}[
            width=\textwidth,
            height=\textwidth,
            xlabel={DP1 running time [s]},
            ylabel={EX2 running time [s]},
            legend cell align=left,
            legend pos=north west,
            xmax=\maxValueRun,
            xmin=\minValueRun,
            ymax=\maxValueRun,
            ymin=\minValueRun,
            ymode=log
            ]
            \addplot[only marks,mark=+,mark options={black!50}]
                table[col sep=tab,
                      x expr=\thisrow{dp1_solve_time}+\thisrow{dp1_transform_time},
                      y expr=\thisrow{ex2lb_solve_time}+\thisrow{ex2lb_transform_time}]
                      {perlin_C_3_dp1_ex2lb.txt};
            \addplot[only marks,mark=x,mark options={black!75}]
                table[col sep=tab,
                      x expr=\thisrow{dp1_solve_time}+\thisrow{dp1_transform_time},
                      y expr=\thisrow{ex2lb_solve_time}+\thisrow{ex2lb_transform_time}]
                      {perlin_C_4_dp1_ex2lb.txt};
            \addplot[only marks,mark=star]
                table[col sep=tab,
                      x expr=\thisrow{dp1_solve_time}+\thisrow{dp1_transform_time},
                      y expr=\thisrow{ex2lb_solve_time}+\thisrow{ex2lb_transform_time}]
                      {perlin_C_5_dp1_ex2lb.txt};
            \addplot[color=black,domain=\minValueRun:\maxValueRun,samples=4] {x};
            \addplot[dashed,color=black!75,domain=\minValueRun:\maxValueRun,samples=4] {\smallFactor*x};
            \addplot[dashed,color=black!75,domain=\minValueRun:\maxValueRun,samples=4] {x/\smallFactor};
            \addplot[dotted,color=black,domain=\minValueRun:\maxValueRun,samples=4] {\bigFactor*x};
            \addplot[dotted,color=black,domain=\minValueRun:\maxValueRun,samples=4] {x/\bigFactor};
          \end{loglogaxis}
	\end{tikzpicture}
        \caption{All instances for $c\in\{3, 4, 5\}$ on one plot.}
    \end{subfigure}\hfill

    \caption{Comparison of EX2 and DP1 on the ``lakes'' instances.
      Each point is a single problem instance.
      DP1 is faster than EX2 if the point is above the diagonal
      and slower otherwise.
      The dashed line indicates running time
      difference by a factor of~$\smallFactor$,
      the dotted line shows a factor of~$\bigFactor$.}
  \label{fig:comparison}
\end{figure}

\subsubsection{``Lakes'' instances.}
\label{sec:lakeres}
Since DP1 and EX2 were the fastest algorithms
in \cref{sec:faultyres},
we compare them in more detail on the ``lakes'' data set.
\cref{fig:comparison} shows the running time
of DP1 and EX2
with $c\in\{3, 4, 5\}$.
DP1 is at least 18.75~times faster
than
EX2 %
on instances with $c=3$ and
at least~$3.75$ times faster on instances with $c=4$.
For $c=3$, most instances are solved at least $36$~times faster by DP1 than by EX2. %
For $c=4$, most instances %
are solved from $6$ to $36$~times faster by DP1.
For $c=5$, EX2 is faster on small instances,
yet in all cases,
we see that DP1 is faster on the instances
that are hard to solve by both algorithms.

\begin{figure}[t]
  \centering
  \small
  \input{timespread-001.tex}
  \hfill
  \input{timespread-002.tex}
  \caption{Running times of DP1 and DP2.
    Each point is an instance of ``lakes'' data set with $c=5$.
    Estimates of the mean and the standard deviation are shown.}
    \label{fig:boxes}
\end{figure}

\cref{fig:boxes} shows
the running time of DP1 and EX2
in dependence on the number of input vertices,
estimates of the mean and the standard deviation.
We see not only that the running time of EX2
clearly grows superpolynomially with the number of vertices,
but also has a higher variance than that of DP1.
The running time of EX2 can easily vary by a factor of 100
for graphs of the same size, whereas that of DP1 varies by a factor of 10.
This makes the running times of DP1 more predictable.

\begin{figure}%
  \centering
  \input{timespread-003.tex}
  \caption{Effect of the data reduction described in \cref{sec:heur} on the ``lakes'' instances with~$c=5$.  Estimates of mean and standard deviation are shown.}
  \label{fig:reduction_effect}
\end{figure}

\subsubsection{The role of data reduction.}
\label{sec:dr-effect}
One reason for the better performance of DP1 compared to EX2
surely is the data reduction  presented in \cref{sec:heur}.
On \cref{fig:reduction_effect},
we see that the heavy edge deletion rule
reduces the number of edges to about 25\,\%.
Additionally removing redundant vertices
reduces the instance size to about 5\,\%.
Yet the vertex deletion rule
is applicable only to DP1.
Applying it to EX1 or EX2
would break the connectivity constraints %
and require
cardinally changing the models
and the additional inequalities for speeding them up.
This is beyond the scope of our work,
which is focused on algorithms with
provable running time bounds.
Moreover, data reduction is not the only reason
for the better performance of DP1:
\cref{fig:boxes} clearly exhibits
a superpolynomial running time dependency of EX2 on the number~$n$ of vertices,
whereas DP1 has a proven \emph{worst-case} running time $O(nm\log n)$
for fixed~$c$.

\subsubsection{Error rate.}
We ran the randomized algorithms DP1 and DP2
with an upper bound of~$\varepsilon=0.1$ on the error probability,
yet in fact the empirical error rate was significantly lower.
Neither DP1 nor DP2
gave incorrect answers on any of the ``faulty grid''
instances.

DP1 yielded incorrect answers on 3.6\,\% of 330
samples from the ``lakes'' data set,
whereas DP2 yielded incorrect answers on 2.5\,\% of 320 samples
(the number of conducted experiments for DP2 is lower
since it was unable to finish some large instances
with \(c=5\) in reasonable time).
The cost of incorrect solutions was higher than
the cost of an optimal solution by at most 3.5\,\%.
We point out that,
by merely doubling the running time of DP1,
one can guarantee an error rate below~$\varepsilon=0.01$
and still significantly outperform EX2 on large instances.

\subsubsection{Conclusion.}
For small~$c$,
DP1 obviously outperforms EX2.
For larger~$c$,
the advantage of DP1 over EX2 becomes smaller.
Yet for large enough instances,
DP1 will always
outperform EX2:
the running time of DP1
for constant~$c$ is merely $O(nm\log n)$,
whereas for EX2,
one has to expect the running time
to depend superpolynomially on~$n$,
as witnessed by \cref{fig:boxes}.

In general,
we thus recommend to use DP1 for exactly solving \mpsc{} instances
with obligatory subgraphs with a small number of connected components
and for large graphs.

\section{Conclusion}
We presented a new algorithm for \miPoSyCo{} that
runs in polynomial time
on instances in which we can find an obligatory subgraph
with logarithmically many connected components.
On instances with few such connected components,
it outperforms state-of-the art integer linear programming models.
To achieve this,
data reduction played a crucial role,
yet we also saw that data reduction with \emph{provable}
effect is hard.

Our algorithms are less suited 
for random test data (as typically used in published work so far) 
because our algorithms make explicit use of 
structure in the input that plausibly occurs in
real-world monitoring instances,
where the layout of the sensors in the network
has to take into account energy-efficiency in order to
maximize the lifetime of the sensor network.

An important theoretical and practical
challenge is to find \plb{}s
that yield obligatory subgraphs with few connected components.
This goes hand in hand with 
identifying scenarios where (more) obligatory 
edges are given by the application.
We identified the scenario
where a sensor network lost connectivity
and has to be reconnected
at minimum additional energy consumption,
but this also may be the case in communication 
networks with designated hub nodes. 

\paragraph{Acknowledgments.}
  The results in
  \cref{sec:dr,sec:heur,sec:dr-effect} were obtained while
  René van Bevern
  and Pavel V.\ Smirnov
  were supported by the
  Russian Foundation for Basic Research
  under grant
  18-501-12031 NNIO\textunderscore a.
  The results in
  \cref{sec:obl,sec:outl,sec:corr,sec:dp,sec:faulty,sec:faultyres}
  were obtained
  while
  René van Bevern
  was supported by
  stipend SP-2178.2019.5
  of the
  President of the Russian Federation
  and Pavel V.\ Smirnov
  was supported by Mathematical Center in Akademgorodok,
  agreement No.\ 075-15-2019-1675 with the Ministry of Science and
  Higher Education of the Russian Federation. 
  The results in \cref{sec:hardness,sec:lakes,sec:lakeres}
  were obtained while both were supported by Mathematical Center in Akademgorodok.
  Both
  thank Oxana Yu.\ Tsidulko for fruitful discussions.

\bibliographystyle{informs}
\bibliography{mylib}

\end{document}